\newtheorem{thm}{Theorem}
\newtheorem{cor}{Corollary}
\newtheorem{lem}{Lemma} 
\newtheorem{assumption}{Assumption}
\newtheorem{defi}{Definition}
\newtheorem{prop}{Proposition}
\newtheorem{exx}{Example}
\newtheorem{remm}{Remark}
\newenvironment{remark}{\begin{remm}\rm }{\hfill \hspace*{1pt} \hfill $\lrcorner$\end{remm}}
\newenvironment{theorem}{\begin{thm} \rm }{\hfill \hspace*{1pt} \hfill $\lrcorner$\end{thm}}
\newenvironment{lemma}{\begin{lem}\rm }{\hfill \hspace*{1pt} \hfill $\lrcorner$\end{lem}}
\newenvironment{example}{\begin{exx}\rm }{\hfill \hspace*{1pt} \hfill $\lrcorner$ \end{exx}}
\newenvironment{proofof}{{\em Proof of }}{\hfill \hspace*{1pt}
\hfill $\blacksquare$}
\newcommand\real{\ensuremath{{\mathbb R}}}
\newcommand\realn{\ensuremath{{\mathbb{R}^n}}}
\newcommand\mymatrix[2]{\left[\begin{array}{#1} #2 \end{array}\right]}
\newcommand{\smallmat}[1]{\left[ \begin{smallmatrix}#1
    \end{smallmatrix} \right]}
\newcommand{\calA}{\mathcal{A}}
\newcommand{\calB}{\mathcal{B}}
\newcommand{\calC}{\mathcal{C}}
\newcommand{\calI}{\mathcal{I}}
\newcommand{\calS}{\mathcal{S}}
\newcommand{\calK}{\mathcal{K}}
\newcommand{\calY}{\mathcal{Y}}
\newcommand{\calX}{\mathcal{X}}
\begin{document}

\title{\LARGE \bf Differential positivity on compact sets}

\author{F. Forni
\thanks{
F. Forni is with the University of Cambridge, Department of Engineering, 
Trumpington Street, Cambridge CB2 1PZ, and with the Department 
of Electrical Engineering and Computer Science, 
University of Li{\`e}ge, 4000 Li{\`e}ge, Belgium, \texttt{ff286@cam.ac.uk}.
The research was supported by the Fund for Scientific Research FNRS
and by the Engineering and Physical Sciences Research Council under Grant EP/G066477/1.
The paper presents research results of the Belgian Network DYSCO
(Dynamical Systems, Control, and Optimization), funded by the
Interuniversity Attraction Poles Programme, initiated by the Belgian
State, Science Policy Office. The scientific responsibility rests with
its authors.} 
}

\date{\today}

\maketitle

\begin{abstract}  
The paper studies differentially positive systems, that is, systems whose linearization 
along an arbitrary trajectory is positive. Extending the results in \cite{Forni2014a_ver1}, 
we illustrate the use of differential positivity on compact forward invariant sets for the 
characterization of bistable and periodic behaviors. Geometric conditions for differential 
positivity are provided. The introduction of compact sets simplifies the use of differential 
positivity in applications.
\end{abstract}

\section{Introduction}

A linear system is positive
if some cone $\calK$ on the system state space is invariant
for the dynamics \cite{Bushell1973}.
Positivity strongly restricts the
behavior of a linear system. Under mild conditions,
the ray $\lambda v \in \calK$ given by 
the Perron-Frobenius eigenvector $v\in \calK$
is an attractor for the system 
dynamics, \cite{Birkhoff1957,Bushell1973}.
This fundamental property is exploited in a number of applications 
\cite{Farina2000,Moreau2004,Roszak2009,Rantzer2012}

Differential positivity extends linear positivity to the nonlinear setting.
A nonlinear system is differentially positive if
its linearization along trajectories
makes a cone (field) invariant \cite{Forni2014a_ver1}.
Differential positive systems are a large
class of systems encompassing 
monotone systems \cite{Angeli2003,Hirsch1995,Smith1995}.
Under mild conditions, the trajectories of a
differentially positive system converge
to a one dimensional attractor, 
a relevant property for the study of
bistable and periodic behaviors.
In comparison to linear positivity, this
attractor is not a ray, but a curve, possibly
closed in the presence of attractive limit cycles. 

In this paper we make 
differential positivity readily available
for the analysis of nonlinear systems
by deriving a number of geometric tools.
The problem of establishing the differential positivity
of a system is encoded into 
a set of pointwise geometric conditions to test.
Differential positivity is 
then used to derive 
novel methods for the analysis of 
simple attractors of nonlinear systems,
typically capturing bistable and periodic behaviors.
Compactness simplifies the use of differential positivity
in applications. In particular, 
the restriction to compact 
and forward invariant sets 
makes the geometric conditions for differential positivity
much simpler to verify in practice.

Section \ref{sec:cone_fields} introduces the
notion of cone fields and characterizes 
their representation as a set of inequalities.
The two large families of polyhedral and quadratic
cone fields are illustrated.
Section \ref{sec:differential_positivity} recalls 
the basic notions of differential positivity.
Section \ref{sec:geometric_conditions} 
provides a set of geometric conditions 
for testing the differential positivity of a system.
The use of the geometric conditions is illustrated on a cooperative
system and on a nonlinear pendulum.
The role of differential positivity 
for the analysis of 
the asymptotic nonlinear behavior 
is discussed in Section \ref{sec:asymptotic_behavior}
and illustrated on the Kuramoto model in 
Section \ref{sec:Kuramoto}. Conclusions follow.

\section{Cone fields}
\label{sec:cone_fields}

We recall some 
basic geometric notions on Riemannian manifolds
which will be useful to the discussion on
differential positivity.
Let $\mathcal{X}$ be a smooth $n$-dimensional manifold endowed with 
a Riemannian metric 
$\langle \cdot,\cdot \rangle_x:T_x\mathcal{X} \times T_x\mathcal{X} \to \mathbb{R}$
where $T_x\mathcal{X}$ denotes the tangent space at $x \in \mathcal{X}$. 
We will use $T\calX$ to denote the tangent bundle of $\calX$,
and 
$|\delta x|_x$ to denote $\sqrt{\langle \delta x, \delta x \rangle_x}$ 
for all $\delta x\in T_x \mathcal{X}$.
Furthermore, given any function $\psi:\calX\to\calY$
between manifolds,
$\partial \psi(x): T_x\mathcal{X} \to T_{\psi(x)}\mathcal{Y}$
will denote the differential of $\psi$ at $x\in \calX$.
Given any set $\calS \subseteq T_x\calX$ we will write
$\partial \psi(x) \calS := \{\psi(x)\delta x \,|\, \delta x \in \calS\} $.

To extend linear positivity into the
differential setting, we will exploit the
notion of conal manifold, that is, 
a manifold $\calX$ endowed with a cone field
\begin{equation}
 \calK_\varepsilon(x) \subseteq T_x\calX \qquad \forall x\in \calX.
\end{equation}
where $0 \leq \varepsilon \ll 1$.
At any $x$,  
$\calK_\varepsilon(x)$ is just a cone in the
vector space $T_x\calX$. The role of the parameter
$\varepsilon$ is clarified by the property
$
\calK_{\varepsilon_2}(x)\setminus\{0\} \subset \calK_{\varepsilon_1}(x)\setminus\{0\} 
$
if  
$
\varepsilon_1 < \varepsilon_2
$,
that we assume throughout the paper.
We use $\calK(x) := \calK_0(x)$.

For each $\varepsilon$, 
each cone $\calK_\varepsilon(x)$
is \emph{closed} and \emph{solid}, and satisfies 
(i) $\calK_\varepsilon(x)+\calK_\varepsilon(x) \subseteq \calK_\varepsilon(x)$, 
(ii) $\alpha \calK_\varepsilon(x) \subseteq \calK_\varepsilon(x) $ for all $\alpha > 0$, (iii) $\calK_\varepsilon(x) \cap -\calK_\varepsilon(x)=\{0\}$,
which make $\calK_\varepsilon(x)$ \emph{convex} and \emph{pointed}.
To avoid pathological cases, we assume
that for every $x_1,x_2\in \calX$, 
there exists a linear invertible mapping  
$\Gamma(x_1,x_2):T_{x_1}\calX \to T_{x_2}\calX$ 
such that, for any $\varepsilon$,
$\Gamma(x_1,x_2)\calK_\varepsilon(x_1) = \calK_\varepsilon(x_2)$.
Cone fields satisfying this property are said \emph{regular}.

To make use of a cone field in computations, the 
cone field $\calK_\varepsilon(x)$ will be represented by   
\begin{equation}
\label{eq:cone_fieldR}
\delta x \in \calK_\varepsilon(x)\setminus\{0\} \ \Leftrightarrow \
K_i\!\left(x,\frac{\delta x}{|\delta x|_x}\right) \geq \varepsilon   \,,\  \forall i\in \calI\subseteq\mathbb{N} 
\end{equation}
where each $K_i:T\calX \to \real$
is a \emph{smooth} function.
For simplicity, in what follows we will use
$K  \geq \varepsilon$ to denote 
the component-wise inequality
$K_i \geq \varepsilon$ for each $\forall i \in \calI $.
For the cone $\mathcal{K}(x)$, 
\eqref{eq:cone_fieldR} is equivalent $K_i(x,\delta x) \geq 0$.
The normalization $\frac{\delta x}{|\delta x|_x}$ 
in \eqref{eq:cone_fieldR} maps every ray $\lambda \delta x \in \calK(x)$, 
$\lambda > 0$, into the point $\frac{\delta x}{|\delta x|_x}$ of the 
unit sphere $\{v\in T_x\calX \,|\,|v|_x=1\}$. 
The normalization makes 
the representation independent
of the length of the tangent vectors $\delta x$. 
Examples of smoothly varying polyhedral 
and quadratic cone fields
are provided at the end of the section.
A simple illustration is in Figure \ref{fig:polyhedral_quadratic}.

A cone field carries naturally the 
useful notion of \emph{conal curve}
$\gamma: \real \to \calX$, which is an
integral curve of the cone field
\begin{equation}
\dot{\gamma}(s) \in \calK(\gamma(s)) 
\qquad \forall s \in \real \ .
\end{equation}
We make the standing assumption
that $|\dot{\gamma}(s)|_{\gamma(s)} = 1$ for any $s\in \real$.
From \eqref{eq:cone_fieldR}, a conal curve
satisfies $K(\gamma(s), \dot{\gamma}(s)) \geq 0$.

\begin{figure}[htbp]
\centering
\includegraphics[width=0.28\columnwidth]{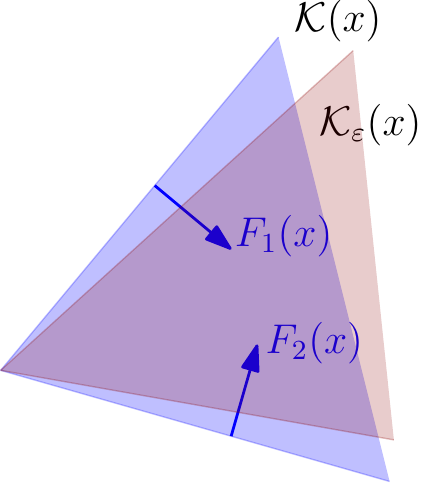}  
\hspace{3mm}
\includegraphics[width=0.47\columnwidth]{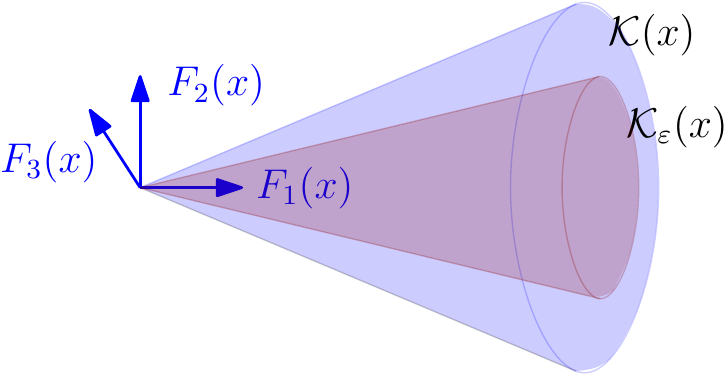}  
\caption{Polyhedral and quadratic cones at $T_x\calX$.}
\label{fig:polyhedral_quadratic}
\end{figure}

\begin{example}\emph{polyhedral cone fields.}
\label{example:polyhedral_cone}
Let $\calX$ be a smooth manifold of dimension $n$
endowed with a Riemannian metric $\langle \cdot,\cdot \rangle_x $.
For any $i \in \calI := \{1,\dots, m\}\subseteq\mathbb{N}$, $m\geq n$, 
define 
\begin{equation}
\label{eq:polyhedral_constraints}
K_i(x,\delta x) := \langle F_i(x), \delta x\rangle_x
\end{equation}
where $F_i(x)\in T_x\calX \setminus\{0\}$.
We assume that: 
\textbf{A1} for every pair of points $x_1, x_2\in \calX$, 
there exists a linear invertible isometry 
$T(x_1,x_2):T_{x_1}\calX \to T_{x_2}\calX$ such that 
$F_i(x_2) =  T(x_1,x_2) F_i(x_1)$;
\textbf{A2}
$\{F_1, \dots, F_m\}$
is a smooth full-rank distribution;
\textbf{A3} the set of constraints 
\eqref{eq:cone_fieldR},\eqref{eq:polyhedral_constraints}
is feasible for some $\overline{\varepsilon} > 0$.
Then, for all $0\leq \varepsilon <\overline{\varepsilon}$,
the cone field $\calK_\varepsilon(x)$ given by 
\eqref{eq:cone_fieldR},\eqref{eq:polyhedral_constraints}
is solid, pointed, convex and regular.

Assumption \textbf{A3} guarantees that, for
$\varepsilon < \overline{\varepsilon}$, 
$\calK_\varepsilon(x)$ is solid, pointed and convex by construction, 
since the distribution is full-rank.
For regularity, 
consider any pair of points $x_1,x_2\in \calX$, 
and take
$\langle F_i(x_1) , \frac{\delta x_1}{|\delta x_1|_{x_1}}\rangle_{x_1} = \varepsilon$
and $\delta x_2 = T(x_1,x_2) \delta x_1$. Then, 
$
\langle F_i(x_2), \delta x_2 \rangle_{x_2}
= 
\langle T(x_1,x_2) F_i(x_1), T(x_1,x_2)\delta x_1 \rangle_{x_2}
=
\langle F_i(x_1), \delta x_1 \rangle_{x_1}
= \varepsilon |\delta x_1|_{x_1}
$. Thus, 
$
\langle F_i(x_2) , \frac{\delta x_2}{|\delta x_2|_{x_2}}\rangle_{x_2} 
= 
\varepsilon \frac{|\delta x_1|_{x_1}}{|\delta x_2|_{x_2}} = \varepsilon
$
since $T(x_1,x_2)$ is an isometry.
\end{example}
\vspace{1mm}

\begin{example}\emph{quadratic cone fields.}
\label{example:quadratic_cone}
Let $\calX$ be a smooth manifold of dimension $n$
endowed with a Riemannian metric $\langle \cdot,\cdot \rangle_x $.
Consider $m\geq n$ vector fields $F_i(x)\in T_x\calX \setminus\{0\}$
and define 
\begin{equation}
\label{eq:quadratic_constraints_vectors}
\begin{array}{rcl}
K_1(x,\delta x) \!\!&\!\!:=\!\!&\!\!  \langle F_1(x), \delta x\rangle_x \vspace{2mm} \\
K_2(x,\delta x) \!\!&\!\!:=\!\!&\!\!  
\langle F_1(x), \!\delta x\rangle_{\!x}^{\!2} - 
\hspace{-5mm}
\sum\nolimits\limits_{i,j\in \{2,\dots,n\}}
\hspace{-5mm}
\langle F_i(x), \!\delta x \rangle_{\!x} \langle F_j(x), \!\delta x \rangle_{\!x}  \ .
\end{array}
\end{equation}

Assumptions \textbf{A1}-\textbf{A3}
with the additional condition \textbf{A4}
$
\langle F_1(x), F_i(x) \rangle_x = 0
$
for any $i > 1$, guarantee that 
the cone field 
\eqref{eq:cone_fieldR}, \eqref{eq:quadratic_constraints_vectors} is 
solid, pointed, convex and regular for $\varepsilon < \overline{\varepsilon}$.

Note that $K_2\geq 0$ characterizes a double cone which is 
refined to a pointed cone by $K_1\geq 0$. 
The cone fields $\calK$ and $\calK_\varepsilon$
are solid and convex by construction. Regularity 
follows from the observation that  
$\langle F_i(x_2), \delta x_2\rangle_{x_2}
= \langle F_i(x_1), \delta x_1\rangle_{x_1}$
for each $x_1,x_2 \in \calX$ and each 
$\delta x_2 = T(x_1,x_2) \delta x_1$.
\end{example}

\section{Differential positivity}
\label{sec:differential_positivity}

\subsection{Differential positivity in forward invariant regions}

A linear system $\dot{x} = A x$, $x\in \real^n$,
is positive if 
there exists a cone $\calK\subseteq \real^n$
which is forward invariant for the system dynamics,
i.e. $e^{At} \calK \subseteq \calK$ for $t\geq 0$,
\cite{Bushell1973}.
Differential positivity extends linear positivity
to nonlinear dynamics 
\begin{equation}
\label{eq:sys}
\Sigma \ : \ \dot{x} = f(x) \qquad x\in \calX, 
\end{equation}
by requiring that 
a given cone field is forward invariant for the 
prolonged dynamics \cite{Crouch1987},
\begin{equation}
\label{eq:prolonged_sys}
\delta \Sigma : \left\{
\begin{array}{rcl}
\dot{x} &=& f(x) \\
\dot{\delta x} &=& \partial f(x) \delta x  
\end{array}
\qquad (x,\delta x)\in T\calX \ .
\right.
\end{equation}

\eqref{eq:prolonged_sys} represents the linearization 
of $\Sigma$ along its trajectories. 
For simplicity we assume $f\!\in\! C^2$
and forward completeness of $\Sigma$. We use 
$\psi(t,x_0):\real\times\calX \to \calX$ to
denote the state reached at time $t$ by the trajectory of 
$\Sigma$ from the initial condition $x_0$. 
Indeed, $\psi(\cdot,x_0) \in \Sigma$. 
We also use $\psi_t(\cdot):\calX \to \calX$,
which maps any $x\in \calX$ into 
$\psi_t(x) := \psi (t,x)$.
Finally, for any $(x,\delta x) \in T\calX$, 
note that
$(\psi(\cdot,x), \partial_x \psi(\cdot,x) \delta x) \in \delta \Sigma$.

Revisiting the definitions in {\cite{Forni2014a_ver1},
consider any forward invariant region $\calC\subseteq \calX$. 
We say that $\Sigma$ is \emph{differentially positive} in 
$\calC$ with respect to the cone field $\calK(x)$
if for all $x\in \calC$ and $t\geq 0$, 
\begin{equation}
\partial \psi_t (x) \calK(x) \subseteq \calK(\psi_t(x)) \ .
\end{equation}
Differential positivity captures
the invariance of the cone field along the linearized dynamics. 
Furthermore, we say that $\Sigma$
is (uniformly) strictly differentially positive if 
it is differentially positive and
there exists $T>0$ and $\varepsilon > 0$ such that,
for all $x\in \calC$ and $t\geq T$,
\begin{equation}
\partial \psi_t (x) \calK(x) \subseteq \calK_\varepsilon(\psi_t(x)) \ .
\end{equation}
Strict differential positivity captures 
the contraction of the cone field 
$\calK(x)$ along the linearized dynamics,
as shown in Figure \ref{fig:diff_positivity}.

\begin{figure}[htbp]
\centering
\includegraphics[width=0.65\columnwidth]{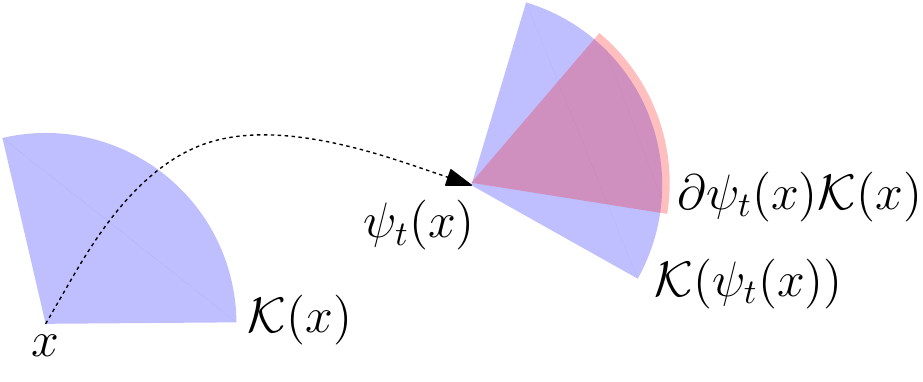}  
\caption{Differential positivity: forward
invariance of the cone field. 
Strict differential positivity: 
contraction of the rays of the cone field.}
\label{fig:diff_positivity}
\end{figure}

\subsection{Contraction of the Hilbert metric}

The contraction of the cone field along trajectories 
has a metric characterization based on the Hilbert metric, \cite{Bushell1973}. 
From \cite[Section VI]{Forni2014a_ver1},
for any given $x\in \calC$,
take any $\delta x, \delta y \in \calK(x)\setminus\{0\}$ and define 
$
M_{\calK(x)}(\delta x,\delta y) \!:=\!
 \inf \{\lambda \!\geq\! 0 \,|\, \lambda \delta y - \delta x \!\in\! \calK(x) \} 
$ \footnote{
$M_{\calK(x)}(\delta x,\delta y) := \infty$
when $ \{\lambda \in \real_{\geq 0} \,|\, \lambda \delta y - \delta x \in \calK(x) \} = \emptyset$.
}
and
$
m_{\calK(x)}(\delta x, \delta y) := 
 \sup \{\lambda \geq 0 \,|\, \delta x - \lambda \delta y  \in \calK(x) \} 
$.
The Hilbert metric $d_{\calK(x)}$ reads
\begin{equation}
\label{eq:lifted_hilbert_metric}
 d_{\calK(x)}( \delta x, \delta y) := \log \left(\frac{ M_{\calK(x)}(\delta x,\delta y)}{ m_{\calK(x)}(\delta x,\delta y)}\right) \ .
\end{equation}

$d_{\calK(x)}$ measures the distance between rays of the cone.
It defines a metric in 
$\calK(x) \cap \{\delta x \in T_x\calX\,|\, |\delta x|_x \!= \!1\} $.
Furthermore,
for any $\delta x,\delta y \in \calK(x)$,
$d_{\calK(x)}(\delta x,\delta y) = 0$
if and only if $\delta x=\lambda \delta y$ with $\lambda \geq 0$,
and $d_{\calK(x)}(\alpha \delta x,\beta \delta y)=
d_{\calK(x)}(\delta x,\delta y)$ for any $\alpha>0$ and $\beta>0$.

The contraction of the cone field along trajectories
is captured by the exponential convergence of the Hilbert metric,
as stated by the next lemma.
\begin{lemma} \cite[Theorem 2]{Forni2014a_ver1}.
\label{thm:dsch}
Let $\Sigma$ be a strictly differentially positive
system with respect to the cone field $\calK(x)$ in 
the forward invariant set $\calC\subseteq\calX$.
Then, there exist $k\geq1$ and $\lambda > 0$ such that, 
for all $x\in \calC$, $\delta x_1,\delta x_2 \in \calK(x)$, 
and $t\geq T$, 
\begin{equation}
\label{eq:Hilbert_contraction}
 d_{\calK(\psi_t(x))}
 (\partial \psi_t(x) \delta x_1,\partial \psi_t(x) \delta x_2) 
 \leq k e^{-\lambda(t-T)} \Delta 
\end{equation}
where $\Delta := 
\sup \{ d_{\calK(x)}(v_1,v_2) \,|\, v_1,v_2 \in \calK_\varepsilon(x) \} < \infty$.
\end{lemma}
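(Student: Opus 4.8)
The plan is to combine \emph{Birkhoff's contraction theorem} with the semigroup property of the flow. Recall Birkhoff's theorem in the form we need: if $A:(V_1,\calK_1)\to(V_2,\calK_2)$ is a linear map between pointed solid cones with $A\calK_1\subseteq\calK_2$, then $d_{\calK_2}(Av_1,Av_2)\le d_{\calK_1}(v_1,v_2)$ for all nonzero $v_1,v_2\in\calK_1$, and moreover the contraction ratio can be taken to be $\tanh(N/4)$, where $N:=\sup\{d_{\calK_2}(Av_1,Av_2)\mid v_1,v_2\in\calK_1\setminus\{0\}\}$ is the projective diameter of the image. Since each fiber $\calK(x)\subseteq T_x\calX$ is a closed, solid, pointed, convex cone, this applies verbatim at the level of tangent spaces, with the Hilbert metrics \eqref{eq:lifted_hilbert_metric}. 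The crucial consequence of \emph{strict} differential positivity is that for every $x\in\calC$ the (invertible) map $\partial\psi_T(x)$ sends $\calK(x)$ into $\calK_\varepsilon(\psi_T(x))$, whose projective diameter, measured by $d_{\calK(\psi_T(x))}$, is exactly the quantity $\Delta$ in the statement; this uses regularity of the cone field, since $\Gamma(x_1,x_2)$ carries $\calK(x_1)$ onto $\calK(x_2)$ and $\calK_\varepsilon(x_1)$ onto $\calK_\varepsilon(x_2)$, hence is a Hilbert isometry, so the supremum defining $\Delta$ is independent of the base point. Because $\Delta<\infty$, the Birkhoff ratio $c:=\tanh(\Delta/4)$ is strictly less than $1$.

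Next I would bootstrap the one-shot contraction at time $T$ into an exponential rate using the chain rule $\partial\psi_{s+t}(x)=\partial\psi_t(\psi_s(x))\circ\partial\psi_s(x)$. Fix $x\in\calC$, set $x_j:=\psi_{jT}(x)$ and $A_j:=\partial\psi_T(x_{j-1})$, so that $\partial\psi_{jT}(x)=A_j\circ\cdots\circ A_1$. Each $A_j$ maps $\calK(x_{j-1})$ into $\calK_\varepsilon(x_j)\subseteq\calK(x_j)$ by strict differential positivity at $x_{j-1}$, hence is a Hilbert contraction of ratio at most $c$ from $\calK(x_{j-1})$ to $\calK(x_j)$. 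Given nonzero $\delta x_1,\delta x_2\in\calK(x)$, the images $A_1\delta x_1,A_1\delta x_2$ lie in $\calK_\varepsilon(x_1)$, so their $d_{\calK(x_1)}$-distance is at most $\Delta$; applying $A_2,\dots,A_j$ in turn contracts by $c$ at each step, giving $d_{\calK(x_j)}(\partial\psi_{jT}(x)\delta x_1,\partial\psi_{jT}(x)\delta x_2)\le c^{\,j-1}\Delta$ for every integer $j\ge1$. Throughout, forward invariance of $\calC$ guarantees $x_j\in\calC$, so differential positivity is available along the whole trajectory.

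Finally, for arbitrary $t\ge T$ write $t=jT+r$ with $j\ge1$ and $0\le r<T$, and factor $\partial\psi_t(x)=\partial\psi_r(x_j)\circ\partial\psi_{jT}(x)$. The map $\partial\psi_r(x_j)$ is positive, hence non-expansive in the Hilbert metric, so $d_{\calK(\psi_t(x))}(\partial\psi_t(x)\delta x_1,\partial\psi_t(x)\delta x_2)\le c^{\,j-1}\Delta$. From $t<(j+1)T$ we get $j-1>t/T-2$, so $c^{\,j-1}\le c^{\,t/T-2}$; writing $c=e^{-\mu}$ with $\mu:=-\ln c>0$ and taking $\lambda:=\mu/T>0$ and $k:=e^{\mu}\ge1$ yields $c^{\,j-1}\le k\,e^{-\lambda(t-T)}$, which is \eqref{eq:Hilbert_contraction}.

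The step I expect to be the real work is the uniformity of the Birkhoff ratio: one must confirm that $\calK_\varepsilon(x)$ sits in the interior of $\calK(x)$ (so that $\Delta$ is finite fiberwise, i.e.\ the image of $\partial\psi_T(x)$ has finite projective diameter), and invoke regularity to make this bound independent of $x$; once a uniform contraction at time $T$ is secured, converting it to an exponential rate via the semigroup property is routine.
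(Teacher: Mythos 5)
This lemma is quoted from \cite[Theorem 2]{Forni2014a_ver1} and the present paper offers no proof of it, so the only meaningful comparison is with the cited source: your argument (Birkhoff--Hopf one-step contraction with ratio $\tanh(\Delta/4)<1$ over intervals of length $T$, chained through the semigroup identity $\partial\psi_{s+t}(x)=\partial\psi_t(\psi_s(x))\circ\partial\psi_s(x)$ and closed off by non-expansiveness of the remainder factor $\partial\psi_r$, with the correct bookkeeping $\lambda=-\ln(\tanh(\Delta/4))/T$, $k=e^{\lambda T}$) is precisely the standard proof used there, and it is correct. The one loose end you flag---finiteness and base-point independence of $\Delta$---is indeed the only content beyond Birkhoff's theorem, and it is supplied by the paper's standing assumptions: regularity gives a linear isomorphism $\Gamma(x_1,x_2)$ carrying $\calK_{\varepsilon}(x_1)$ onto $\calK_{\varepsilon}(x_2)$ for every $\varepsilon$ (hence a Hilbert isometry, making $\Delta$ uniform in $x$), while the representation \eqref{eq:cone_fieldR} with margin $\varepsilon>0$ places the compact set $\calK_\varepsilon(x)\cap\{|v|_x=1\}$ strictly inside $\mathrm{int}\,\calK(x)$, so its projective diameter is finite.
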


Lemma \ref{thm:dsch} and the following 
(mild) technical assumption
are crucial  for the 
theorems of Section \ref{sec:asymptotic_behavior}.
\begin{assumption}
\label{assume:completeness}
$(\calK(x)\cap \{\delta x \in T_x\calX\,|\, |\delta x|_x =1\},d_{\calK(x)})$ is a \emph{complete metric space} for all $x\in \calC$.
\end{assumption}
The reader is referred to \cite[Section 4]{Bushell1973}, \cite[Section 2.5]{Lemmens2012},
or \cite{Zhai2011} for examples of complete metric spaces on cones.

\section{Geometric conditions}
\label{sec:geometric_conditions}

To provide geometric conditions for differential positivity
we first reformulate the property
using the representation \eqref{eq:cone_fieldR}. For instance, 
$\Sigma$ is differentially positive if 
for all $t \geq 0$ and all $(x,\delta x) \in T\calX$,
\begin{equation}
\label{eq:diff+}
K(x,\delta x)\geq 0 
\ \Rightarrow \ 
K(\psi_t(x),\partial \psi_t(x)\delta x) \geq 0 \ .
\end{equation}
In addition, $\Sigma$ is strictly 
differentially positive if 
there exists $T\!>\!0$ and $\varepsilon \!>\! 0$
such that, for all 
$t \!\geq\! T$ and all $(x,\delta x) \!\in\! T\calX$,
\begin{equation}
\label{eq:sdiff+}
K(x,\delta x)\geq 0 \ \Rightarrow \ 
K\!\left(\psi_t(x),\frac{\partial \psi_t(x)\delta x}{|\partial \psi_t(x)\delta x|_{\psi_t(x)}}\right) \geq \varepsilon \ .
\end{equation}
\eqref{eq:diff+} and \eqref{eq:sdiff+} 
capture the invariance and the contraction 
of the cone field along trajectories, 
which leads to the following pointwise geometric 
conditions for differential positivity.

\begin{theorem}
\label{thm:diff+_geom}
$\Sigma$ is differentially positive in the forward invariant 
set $\calC\subseteq\calX$ with respect to $\calK(x)$ if 
for any $(x,\delta x)\in T\calX$ such that $K(x,\delta x) \geq 0$
and $x\in \calC$,
\begin{equation}
\label{eq:diff+_geom}
K_i(x,\delta x) = 0 
\ \Rightarrow \ 
\partial K_i(x,\delta x) 
\mymatrix{c}{f(x) \\ \partial f(x) \delta x} \geq 0 \ .
\end{equation}
\end{theorem}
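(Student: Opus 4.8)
The plan is to reduce the flow-level invariance statement \eqref{eq:diff+} to the pointwise differential inequality \eqref{eq:diff+_geom} via a standard Nagumo-type (sub-tangentiality) argument applied to the prolonged system $\delta\Sigma$ in \eqref{eq:prolonged_sys}. Fix $x_0 \in \calC$ and $\delta x_0 \in \calK(x_0)$, so that $K(x_0,\delta x_0) \geq 0$, and let $(x(t),\delta x(t)) := (\psi_t(x_0), \partial\psi_t(x_0)\delta x_0)$ be the solution of $\delta\Sigma$; since $\calC$ is forward invariant, $x(t)\in\calC$ for all $t\geq 0$. For each index $i\in\calI$ define the scalar function $g_i(t) := K_i(x(t),\delta x(t))$, which is $C^1$ because $K_i$ is smooth and $f\in C^2$, with derivative $\dot g_i(t) = \partial K_i(x(t),\delta x(t))\,[f(x(t));\,\partial f(x(t))\delta x(t)]$. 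The goal is to show $g_i(t)\geq 0$ for all $t\geq 0$ and all $i$; this is exactly $K(\psi_t(x_0),\partial\psi_t(x_0)\delta x_0)\geq 0$, i.e. $\partial\psi_t(x_0)\calK(x_0)\subseteq\calK(\psi_t(x_0))$, which is differential positivity in $\calC$.

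The core step is the invariance argument. Hypothesis \eqref{eq:diff+_geom} says precisely that whenever the state $(x,\delta x)$ of $\delta\Sigma$ lies on the boundary face $\{K_i = 0\}$ of the cone (while still satisfying $K(x,\delta x)\geq 0$ and $x\in\calC$), the vector field of $\delta\Sigma$ points in the direction of nondecreasing $K_i$, i.e. $\dot g_i \geq 0$ there. I would argue by contradiction: suppose $g_i$ becomes negative for some $i$ and some time, and let $t^* := \inf\{t\geq 0 : g_j(t) < 0 \text{ for some } j\}$. By continuity and $g_j(0)\geq 0$ we have $t^*\geq 0$, $g_j(t^*)\geq 0$ for all $j$, and there is an index $i$ with $g_i(t^*) = 0$; moreover $(x(t^*),\delta x(t^*))$ satisfies $K\geq 0$ and $x(t^*)\in\calC$, so \eqref{eq:diff+_geom} applies and gives $\dot g_i(t^*)\geq 0$. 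A bare sign condition $\dot g_i(t^*)\geq 0$ is not quite enough to rule out immediate exit (the classical subtangentiality subtlety), so the clean route is to introduce, for $\eta>0$, the perturbed functions $g_i^\eta(t) := g_i(t) + \eta\, e^{Lt}$ where $L$ is a Lipschitz-type bound (on the compact/relevant region) for the map $t\mapsto \dot g_i$ in terms of the $g_i$'s — more precisely one shows $\dot g_i^\eta > 0$ at any putative first zero of $g_i^\eta$, using \eqref{eq:diff+_geom} together with continuity to handle the perturbation, so $g_i^\eta$ can never reach zero; letting $\eta\to 0$ yields $g_i\geq 0$. Alternatively, and perhaps more transparently here, one invokes Nagumo's theorem directly: the set $\calS := \{(x,\delta x)\in T\calX : x\in\calC,\ K(x,\delta x)\geq 0\}$ is closed, and \eqref{eq:diff+_geom} is exactly the statement that the vector field $(f(x),\partial f(x)\delta x)$ is subtangential to $\calS$ at every boundary point (the faces $x\in\partial\calC$ being handled by forward invariance of $\calC$, the faces $K_i=0$ by \eqref{eq:diff+_geom}, since the relevant contingent cone is the intersection of the half-spaces $\{\dot g_i\geq 0\}$ over the active constraints), hence $\calS$ is forward invariant for $\delta\Sigma$.

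One technical point to address carefully: when several constraints are simultaneously active at $(x(t^*),\delta x(t^*))$, one needs the contingent (Bouligand tangent) cone of $\calS$ at that point to be cut out by exactly the active inequalities $\partial K_i\,[f;\partial f\,\delta x]\geq 0$; this requires a mild constraint-qualification-type regularity of the representation \eqref{eq:cone_fieldR}, which is available because $\calK_\varepsilon$ is solid with nonempty interior (the $K_i$ can be chosen with linearly independent gradients on active faces in the relevant examples — polyhedral and quadratic — of Section \ref{sec:cone_fields}). I expect this constraint-qualification/contingent-cone bookkeeping, together with the choice between the $\eta e^{Lt}$ comparison argument and a direct appeal to Nagumo, to be the main obstacle; the rest is routine. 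Once forward invariance of $\calS$ under $\delta\Sigma$ is established, we read it off as $\partial\psi_t(x)\calK(x)\subseteq\calK(\psi_t(x))$ for all $x\in\calC$ and $t\geq 0$, which is the claim.
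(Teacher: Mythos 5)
Your proposal is correct and follows essentially the same route as the paper: the paper's own proof is a one-line Nagumo-type argument observing that at any boundary point where $K_i=0$ the condition \eqref{eq:diff+_geom} gives $\frac{d}{dt}K_i \geq 0$, so the trajectory of $\delta\Sigma$ cannot cross into $K_i<0$. Your version is in fact more careful than the paper's, since you explicitly address the subtangentiality subtlety (via the $\eta e^{Lt}$ perturbation or a direct appeal to Nagumo's theorem) that the paper's proof silently elides.
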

\begin{proof}
Whenever $(\psi(\cdot,x),\partial_x \psi(\cdot,x)\delta x)\in\delta\Sigma$
reaches the boundary of the cone field at time $t$,
we have 
$K(\psi(t,x),\partial_x \psi(t,x)\delta x) \geq 0$ and 
$K_i(\psi(t,x),\partial_x \psi(t,x)\delta x = 0$, 
for some $i \in \calI$. From \eqref{eq:diff+_geom},
$\frac{d}{dt} 
K_i(\psi(t,x),\partial_x \psi(t,x)\delta x \geq 0$, 
thus 
$K_i(\psi(t,x),\partial_x \psi(t,x)\delta x)$
either grows positive or remains at zero.
\end{proof}

For strict differential positivity we need to
take into account vectors on the unit sphere.
For instance, for any $\delta x \in T_x\calX$, 
consider $\vartheta := \frac{\delta x}{|\delta x|_x}$.
From \eqref{eq:prolonged_sys}, 
\begin{equation}
\label{eq:normalized_sys}
\dot{\vartheta} = \left(\partial f(x) - \lambda(x,\vartheta)\right) \vartheta
\end{equation}
where  $\lambda(x,\vartheta)$ normalizes the
action of the operator $\partial f(x)$ to guarantee that 
any trajectory $\vartheta(\cdot)$ of
\eqref{eq:normalized_sys} from $|\vartheta(0)|_{x(0)}=1$
satisfies $|\vartheta(t)|_{x(t)}=1$ for all $t> 0$.
For example, given the representation
$|\delta x|_x := (\delta x^T G(x) \delta x)^{\frac{1}{2}}$, where
$G(x)$ is the Riemannian tensor in local coordinates, 
$\lambda(x,\vartheta) := 
\frac{1}{2} \vartheta^T \left( G(x) \partial f(x) + \partial f(x)^T G(x) + 
\sum_{i=1}^n \partial_{x_i} G(x) f(x)_i \right) \vartheta$.
($f(x)_i$ is the $i$th component of the vector $f(x)$).

\begin{theorem}
\label{thm:sdiff+_geom}
$\Sigma$ is strictly differentially positive 
in the forward invariant set $\calC\subseteq\calX$ 
with respect to $\calK(x)$ if 
there exist $T>0$ and $\varepsilon>0$ such that,
for any $(x,\vartheta)\in T \calX$ that satisfies
$K(x,\vartheta) \geq 0$, $x\in \calC$ and $|\vartheta|_x = 1$,
\begin{equation}
\label{eq:sdiff+_geom}
\begin{array}{l}
 0 \leq K_i\left(x,\vartheta \right)  \leq \varepsilon 
\ \ \Rightarrow  \vspace{1mm} \\
 \partial K_i\!\left(x,\vartheta \right) \!
\mymatrix{c}{\!f(x)\! \\ \!  \left(\partial f(x) \!-\! \lambda(x,\vartheta)\right) \vartheta \!} \geq \dfrac{\varepsilon}{ T} \ .
\end{array} \vspace{-3mm}
\end{equation} \vspace{2mm}
\end{theorem}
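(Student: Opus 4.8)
## Proof Proposal for Theorem 2 (Strict Differential Positivity Geometric Conditions)

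The plan is to mirror the proof strategy of Theorem 1, but now tracking not just the sign of $K_i$ along the prolonged dynamics but its actual growth rate, measured along the normalized flow $\vartheta$ on the unit sphere bundle. First I would fix an arbitrary $(x,\delta x)\in T\calX$ with $x\in\calC$ and $K(x,\delta x)\geq 0$, and set $\vartheta(t) := \partial_x\psi(t,x)\delta x / |\partial_x\psi(t,x)\delta x|_{\psi_t(x)}$, which by construction satisfies $|\vartheta(t)|_{\psi_t(x)}=1$ for all $t\geq 0$ and obeys the normalized ODE \eqref{eq:normalized_sys} driven by $x(t)=\psi_t(x)$. Since $\Sigma$ is differentially positive (this already follows from \eqref{eq:sdiff+_geom}, because the hypothesis at $K_i=0$ gives $\partial K_i \cdot (\cdot) \geq \varepsilon/T > 0$, so Theorem 1 applies), we know $K(x(t),\vartheta(t))\geq 0$ for all $t\geq 0$; the task is to show $K_i(x(T),\vartheta(T))\geq\varepsilon$ for every $i\in\calI$.

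The core of the argument is a differential-inequality / comparison-lemma step applied component-wise. Fix $i\in\calI$ and consider the scalar function $h_i(t) := K_i(x(t),\vartheta(t))$. Along the normalized dynamics, $\dot h_i(t) = \partial K_i(x(t),\vartheta(t))\,[f(x(t));\,(\partial f(x(t))-\lambda(x(t),\vartheta(t)))\vartheta(t)]$. The hypothesis \eqref{eq:sdiff+_geom} says precisely: whenever $h_i(t)\in[0,\varepsilon]$ (with the side conditions $K(x(t),\vartheta(t))\geq 0$, $x(t)\in\calC$, $|\vartheta(t)|=1$, all of which hold along our trajectory), we have $\dot h_i(t)\geq\varepsilon/T$. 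Now I would run the standard comparison argument: starting from $h_i(0)\geq 0$, as long as $h_i$ stays in $[0,\varepsilon]$ it increases at rate at least $\varepsilon/T$, so it reaches the level $\varepsilon$ no later than time $T$; and once it reaches $\varepsilon$ it cannot return below $\varepsilon$ before time $T$ either, because at the moment it would touch $\varepsilon$ from above the hypothesis again forces $\dot h_i\geq\varepsilon/T>0$. Hence $h_i(T)\geq\varepsilon$. Since $i$ was arbitrary, $K(x(T),\vartheta(T))\geq\varepsilon$, i.e. $\partial\psi_T(x)\delta x \in \calK_\varepsilon(\psi_T(x))$ by \eqref{eq:cone_fieldR}; and the same argument from any $t_0$ with $x(t_0)\in\calC$ gives the conclusion for all $t\geq T$ by forward invariance of $\calC$ and the semigroup property.

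I expect the main obstacle to be making the comparison step fully rigorous when $h_i$ is merely continuous (the flow is $C^1$ in $t$, so $h_i\in C^1$, which helps) and when several constraints are simultaneously active or the trajectory slides along $\partial\calK$ — one must be careful that the side conditions in \eqref{eq:sdiff+_geom}, in particular $K(x(t),\vartheta(t))\geq 0$, remain valid, which is exactly what differential positivity (Theorem 1) guarantees, so the two theorems must be invoked in the right order. A secondary technical point is confirming that $\lambda(x,\vartheta)$ is well-defined and smooth enough along the trajectory so that $\vartheta(\cdot)$ is genuinely $C^1$ and \eqref{eq:normalized_sys} is the correct evolution of the normalized vector; this is routine given $f\in C^2$ and the explicit formula for $\lambda$ provided before the theorem. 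Finally, one should note $\Delta<\infty$ is not needed here — that belongs to Lemma 1 — so the proof stays purely at the level of the pointwise geometric condition.
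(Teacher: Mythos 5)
Your proposal is correct and follows essentially the same route as the paper: both pass to the normalized dynamics \eqref{eq:normalized_sys}, obtain differential positivity from the argument of Theorem~\ref{thm:diff+_geom}, and then integrate the rate bound $\varepsilon/T$ over $[0,T]$ to force $K_i$ up to the level $\varepsilon$ (the paper states this as a contradiction with the assumption that $0\leq K_i<\varepsilon$ on all of $[0,T]$). Your explicit observation that $K_i$ cannot drop back below $\varepsilon$ once it reaches that level is a small completeness point the paper leaves implicit.
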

\begin{proof}
Consider any trajectory 
$(\psi(\cdot,x),\partial_x \psi(\cdot,x)\delta x)\in\delta\Sigma$.
For simplicity, define 
$\vartheta(t) := \frac{\partial_x \psi(t,x)\delta x}{|\partial_x \psi(t,x)\delta x|_{\psi(t,x)}}$ and
$x(t) := \psi(t,x)$.
Differential positivity follows from the argument 
of Theorem \ref{thm:diff+_geom}. 
For strict differential positivity, 
suppose that 
for all $t \in [0,T]$,
$0 \leq K_i(x(t),\vartheta(t)) < \varepsilon$. Then, \eqref{eq:sdiff+_geom}
guarantees that
$K_i(x(T),\vartheta(T)) 
= K_i(x(0),\vartheta(0)) + \int_0^T \frac{d}{dt} K_i(x(t),\vartheta(t))dt  
\geq \int_0^T \frac{\varepsilon}{ T} dt 
= \varepsilon$. A contradiction.
\end{proof}
\vspace{1mm}

Because of the normalization, 
verifying \eqref{eq:sdiff+_geom} 
may be a daunting task in practice. However, 
those conditions become simpler when the
forward invariant region $\calC$ is \emph{compact}.

\begin{theorem}
\label{thm:sdiff+_geom_simple}
$\Sigma$ is strictly differentially positive in 
the compact and forward invariant set $\calC\subseteq\calX$
with respect to $\calK(x)$ if 
for
any $(x,\vartheta)$ such that $K(x,\vartheta) \geq 0$, $x\in \calC$ and $|\vartheta|_x = 1$,
\begin{equation}
\label{eq:sdiff+_geom_simple}
K_i\left(x,\vartheta \right)  = 0
\ \ \Rightarrow \ \
\partial K_i\!\left(x,\vartheta \right) \!
\mymatrix{c}{\! f(x)\! \\ \!  \partial f(x) \vartheta \!} > 0 \ .
\vspace{-3mm}
\end{equation} \vspace{2mm}
\end{theorem}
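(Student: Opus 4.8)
The plan is to deduce the statement from Theorem~\ref{thm:sdiff+_geom}: when $\calC$ is compact, the strict pointwise inequality \eqref{eq:sdiff+_geom_simple} can be upgraded to the normalization-dependent condition \eqref{eq:sdiff+_geom} for a suitable pair $(T,\varepsilon)$, and then Theorem~\ref{thm:sdiff+_geom} yields strict differential positivity directly.

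The observation that makes the simpler condition \eqref{eq:sdiff+_geom_simple} sufficient --- so that the normalization term $\lambda(x,\vartheta)$ of \eqref{eq:normalized_sys} disappears from the test --- is that this term contributes nothing when the constraint $K_i$ is active. Writing
$$ \partial K_i(x,\vartheta)\mymatrix{c}{f(x) \\ (\partial f(x)-\lambda(x,\vartheta))\vartheta} \;=\; \partial K_i(x,\vartheta)\mymatrix{c}{f(x) \\ \partial f(x)\vartheta} \;-\; \lambda(x,\vartheta)\,\partial K_i(x,\vartheta)\mymatrix{c}{0 \\ \vartheta}\,, $$
I would show that $\partial K_i(x,\vartheta)\smallmat{0 \\ \vartheta}=0$ whenever $\vartheta\in\calK(x)$ and $K_i(x,\vartheta)=0$. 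Since $\calK(x)$ is a cone, the entire ray $\{(1+s)\vartheta: s>-1\}$ lies in $\calK(x)$, so by \eqref{eq:cone_fieldR} the smooth scalar function $s\mapsto K_i(x,(1+s)\vartheta)$ is nonnegative on $(-1,\infty)$ and attains the value $0$ at the interior point $s=0$; its derivative there, which is exactly $\partial K_i(x,\vartheta)\smallmat{0 \\ \vartheta}$, must therefore vanish. Consequently, on the active set $\{K_i(x,\vartheta)=0\}$ the left-hand side of \eqref{eq:sdiff+_geom} coincides with that of \eqref{eq:sdiff+_geom_simple}, hence is strictly positive.

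Compactness then does the rest. Set $S:=\{(x,\vartheta)\in T\calX: x\in\calC,\ |\vartheta|_x=1,\ K(x,\vartheta)\ge0\}$, which is compact because $\calC$ is compact and the unit sphere $\{v\in T_x\calX:|v|_x=1\}$ of each tangent space is compact, and put $S_i:=\{(x,\vartheta)\in S: K_i(x,\vartheta)=0\}$, a compact subset. By the previous paragraph the continuous map $(x,\vartheta)\mapsto\partial K_i(x,\vartheta)\smallmat{f(x) \\ (\partial f(x)-\lambda(x,\vartheta))\vartheta}$ is strictly positive on $S_i$, hence $\ge 2c_i$ there for some $c_i>0$; a standard contradiction argument along a convergent subsequence in $S$ then produces $\bar\varepsilon_i>0$ such that the same map is $\ge c_i$ on all of $\{(x,\vartheta)\in S: 0\le K_i(x,\vartheta)\le\bar\varepsilon_i\}$. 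Choosing $\varepsilon:=\min_{i\in\calI}\bar\varepsilon_i$, $c:=\min_{i\in\calI}c_i$ (the index set being finite, as in Examples~\ref{example:polyhedral_cone}--\ref{example:quadratic_cone}) and $T:=\varepsilon/c$, one obtains precisely \eqref{eq:sdiff+_geom}, so Theorem~\ref{thm:sdiff+_geom} applies and $\Sigma$ is strictly differentially positive in $\calC$.

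The only non-routine step is the cone-boundary identity of the second paragraph; the remainder is routine compactness and continuity (in particular $\lambda(x,\vartheta)$ is continuous on $S$ because $f\in C^2$ and the metric is smooth, and any $T\ge\varepsilon/c$ works in \eqref{eq:sdiff+_geom}, so the construction is consistent). For an infinite index set $\calI$ the argument goes through as soon as the bound of the third paragraph can be made uniform in $i$.
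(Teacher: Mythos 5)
Your proposal is correct and follows essentially the same route as the paper: both rest on the observation that $[\partial_\vartheta K_i(x,\vartheta)]\vartheta=0$ on the active set (so the normalization term $\lambda(x,\vartheta)$ contributes nothing there), followed by a compactness/continuity argument yielding the uniform margin $\varepsilon/T$ needed to invoke the mechanism of Theorem~\ref{thm:sdiff+_geom}. The only cosmetic differences are that you obtain the boundary identity by first-order optimality of $s\mapsto K_i(x,(1+s)\vartheta)$ at $s=0$ whereas the paper asserts $K_i(x,\rho\vartheta)=0$ for all $\rho>0$ and differentiates, and that you package the remaining estimates into a single continuous function on a compact set instead of the paper's explicit constants $k_1,k_2,k_3$.
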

\begin{proof}
\textbf{(i)}~For $K(x,\vartheta) \geq 0$, consider
$K_i\left(x,\vartheta \right)  = 0$. Then, $K_i\left(x,\rho \vartheta \right)  = 0$ for any $\rho > 0$,
thus $[\partial_\vartheta K_i\left(x,\vartheta \right)] \vartheta = 0$. Since
$\calC$ is a compact set, by continuity, there exists $k_1>0$ (sufficiently large) 
and $\varepsilon>0$ (sufficiently small)  
such that
\begin{equation}
\label{eq:sdiff+_geom_simple_boundary}
0 \leq K_i\left(x,\vartheta \right)  \leq \varepsilon
\ \Rightarrow \ [\partial_\vartheta K_i\left(x, \vartheta \right)] \vartheta \leq k_1 \varepsilon \ .
\end{equation}

\textbf{(ii)}~Exploiting the compactness of $\calC$ again,
\eqref{eq:sdiff+_geom_simple} guarantees that 
there exist $k_2 > 0$ and a small $\varepsilon > 0$ such that 
$
K_i\left(x,\vartheta \right)  = 0
\Rightarrow  
\partial K_i\!\left(x,\vartheta \right) \!
\mymatrix{cc}{\! f(x)^T\! & \!  \partial f(x)^T \vartheta \!}^T \geq 
k_2 + \overline{\lambda}k_1 \varepsilon 
$
where
$\overline{\lambda} := \max_{x\in \calC,|\vartheta|_x =1, \calK(x,\vartheta) \geq 0}  \lambda(x,\vartheta)$
and $\lambda(x,\vartheta)$ refers to \eqref{eq:normalized_sys}.
By continuity, for $\varepsilon$ sufficiently small
there exists $k_3>0$ such that
\begin{equation}
\label{eq:sdiff+_geom_simple3}
\begin{array}{l}
0 \leq K_i\left(x,\vartheta \right) \leq \varepsilon
\ \ \Rightarrow  \vspace{1mm} \\
\partial K_i\!\left(x,\vartheta \right) \!
\mymatrix{c}{\! f(x)\! \\ \!  \partial f(x) \vartheta \!} \geq 
k_2+ \overline{\lambda}k_1 \varepsilon - k_3 \varepsilon \geq \frac{k_2}{2}+ \overline{\lambda}k_1\ .
\end{array}
\end{equation}
for any $x\in \calC$ and 
$|\vartheta|_x = 1$ such that $\calK(x,\vartheta) \geq 0$.

\textbf{(iii)}~We follow now the 
proof of Theorem \ref{thm:sdiff+_geom},
combining \textbf{(i)} and \textbf{(ii)}.
Consider any trajectory 
$(\psi(\cdot,x),\partial_x \psi(\cdot,x)\delta x)\in\delta\Sigma$,
define 
$\vartheta(t) := \frac{\partial_x \psi(t,x)\delta x}{|\partial_x \psi(t,x)\delta x|_{\psi(t,x)}}$ and
$x(t) := \psi(t,x)$.
Take $T := \frac{2 \varepsilon}{k_2}$ 
and suppose that for all $t \in [0,T]$,
$0 \leq K_i(x(t),\vartheta(t)) < \varepsilon$.
Combining \eqref{eq:normalized_sys}, \eqref{eq:sdiff+_geom_simple_boundary} 
and \eqref{eq:sdiff+_geom_simple3}, we get
\begin{equation}
\begin{array}{l}
\frac{d}{dt} K_i(x(t),\vartheta(t)) \ \geq \vspace{1mm}\\
\geq \frac{\varepsilon }{T}+ \overline{\lambda}k_1 \varepsilon 
- \underbrace{\lambda(x(t), \vartheta(t))}_{\leq \overline{\lambda}}
\underbrace{[\partial_{\vartheta(t)} K_i(x(t),\vartheta(t))] \vartheta(t)}_{\leq k_1\varepsilon} 
\geq \frac{\varepsilon}{T} \vspace{-3mm}
\end{array} \vspace{2mm}
\end{equation}
which leads to a contradiction by integration over the interval $[0,T]$,
as in the proof of Theorem \ref{thm:sdiff+_geom}.
\end{proof}

Theorem \ref{thm:sdiff+_geom_simple} is illustrated in the following
examples.
\begin{example}\emph{Cooperative systems.}
\label{example:cooperativity}
A nonlinear system $\Sigma $ given by $\dot{x} = f(x)$, $x\in \realn$,
is cooperative if its Jacobian $\partial f(x)$ has
nonnegative off-diagonal elements, \cite{Smith1995}.
Cooperative systems are differentially positive systems
in $\calX := \realn$ with respect to the constant polyhedral 
cone field given by the positive orthant, that is, 
$\calK(x) := \real^n_+$ for any $x\in \calX$
(Theorem \ref{thm:diff+_geom}).
Given any compact and forward invariant 
region $\calC\subseteq\realn$, $\Sigma$
is strictly differentially positive if
the off-diagonal elements of $\partial f(x)$ 
are strictly positive
(Theorem \ref{thm:sdiff+_geom_simple}). To see this, 
consider the standard inner product in $\realn$
and the cone field \eqref{eq:polyhedral_constraints}
given by $F_i(x) := e_i$ for each $x\in \calC$
and $i \in \calI := \{1,\dots,n\}$,
where $e_i$ is the canonical base.
Then, $\delta x \in \calK(x)$ reads
$e_i^T \delta x \geq 0$ for all $i$. 
\eqref{eq:sdiff+_geom_simple} reads
$e_i^T \delta x = 0$ $\Rightarrow$ $e_i^T \partial f(x) \delta x > 0
$
for $|\delta x|_2 = 1$, 
which is equivalent to
$
e_i^T \partial f(x) e_j > 0
$
for all
$
i,j \in \calI$, $i\neq j$.
\end{example}

\begin{example}\emph{Differential positivity of the pendulum.}
Consider the nonlinear pendulum given by the equations
$\dot{\vartheta} = v$, 
$\dot{v} = -\sin(\vartheta) - k v + u$.
The linearization reads 
$\dot{\delta \vartheta} = \delta v$, 
$\dot{\delta v} = -\cos(\vartheta)\delta \vartheta  - k \delta v $.
Theorem \ref{thm:sdiff+_geom_simple} guarantees that
for any $k>2$, any input $u$, and any compact and forward invariant region $\calC$, 
the pendulum is strictly differentially positive in $\calC$
with respect to the cone field \eqref{eq:polyhedral_constraints} given by
\begin{equation}
 \delta \vartheta \geq 0 \qquad \delta \vartheta + \delta v \geq 0
\end{equation}
For instance, for $|\smallmat{\delta \vartheta & \delta v}^T| = 1$,
\textbf{(i)} $\delta \vartheta = 0, \delta \vartheta + \delta v > 0$.
Then $\dot{\delta \vartheta} = \delta v > - \delta \vartheta > 0$.
\textbf{(ii)} $\delta \vartheta > 0, \delta \vartheta + \delta v = 0$.
Then, $\dot{\delta \vartheta} + \dot{\delta v} =
 \delta v - \cos(\vartheta) \delta \vartheta - k \delta v
 \geq (k-1-1) \delta \vartheta > 0$. 

Revisiting \cite[Section VIII]{Forni2014a_ver1},
Theorem \ref{thm:PB} below can be used to establish
the existence of limit cycles for $u > 1$.
\end{example}

\section{Asymptotic behavior}
\label{sec:asymptotic_behavior}

\subsection{Stable attractors}
It is well known that under mild conditions
almost every bounded trajectory
of a monotone system converges to a fixed point
\cite{Hirsch1988,Smith1995}.
This fundamental result 
has been recently revisited in \cite[Corollary 5]{Forni2014a_ver1},
with a new proof based on differential positivity.
The next theorem extends this result to a larger class 
of systems.

Given a compact set $\calC\subseteq\calX$ we say that a 
conal curve $\gamma : \real \to \calX$
\emph{intersects the boundary of $\calC$ twice} 
if for any $s\in \real$ such that $\gamma(s)\in \calC$, 
there exists an interval $\underline{s}\leq s \leq \overline{s}$ 
such that 
$\gamma(\underline{s}), \gamma(\overline{s}) \notin \calC$.
 
\begin{theorem}
\label{thm:bistability}
Under Assumption \ref{assume:completeness},
consider a strictly differentially positive system $\Sigma$ 
with respect to the cone field $\calK(x)$
in a compact forward invariant region $\calC\subseteq\calX$. Suppose that
every conal curve $\gamma:\real \to \calX$ 
intersects the boundary of $\calC$ twice.
Then, from almost every initial condition in $\calC$
the trajectories of $\Sigma$ converge asymptotically to a fixed point. 
\end{theorem}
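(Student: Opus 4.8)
The plan is to combine the Hilbert-metric contraction of Lemma~\ref{thm:dsch} with a Poincar\'e--Bendixson-type argument restricted to the compact set $\calC$. First I would fix an arbitrary bounded trajectory $\psi(\cdot,x_0)$ with $x_0\in\calC$; since $\calC$ is compact and forward invariant, its $\omega$-limit set $\Omega(x_0)\subseteq\calC$ is nonempty, compact, connected and invariant. The key structural fact I would extract from strict differential positivity is that the linearized flow asymptotically aligns all cone vectors: by Lemma~\ref{thm:dsch} and Assumption~\ref{assume:completeness} (completeness guarantees the relevant limit exists), for each $x\in\calC$ there is a well-defined one-dimensional ``dominant direction'' $w(x)\in\calK(x)$ such that $\partial\psi_t(x)\,\delta x$ converges in Hilbert distance to the ray $\real_{\geq0}\,w(\psi_t(x))$ for every $\delta x\in\calK(x)\setminus\{0\}$, uniformly; moreover $w$ is continuous on $\calC$ and, where $f$ is nonvanishing, one expects $f(x)\in\calK(x)$ to be (a positive multiple of) this dominant direction along the $\omega$-limit set, because $\partial\psi_t(x)f(x)=f(\psi_t(x))$ is itself a cone vector carried along. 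This yields a continuous line field on $\Omega(x_0)$ and an ordering structure: integrating $w$ gives the conal curves of the statement.

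Next I would use the hypothesis that every conal curve intersects the boundary of $\calC$ twice. The role of this assumption is to rule out the ``periodic'' alternative of the Poincar\'e--Bendixson dichotomy: if $\Omega(x_0)$ contained no fixed point, then by the alignment property the flow on $\Omega(x_0)$ would be, up to the Hilbert-metric contraction, conjugate to a flow along a closed conal curve (a limit cycle tangent to the dominant line field), which would be a conal curve entirely contained in $\calC$ — contradicting the twice-intersection hypothesis, since a closed curve in $\calC$ never leaves $\calC$. Hence $\Omega(x_0)$ must contain a fixed point $\bar x$. To upgrade ``contains a fixed point'' to ``converges to a fixed point,'' I would invoke strict differential positivity once more: the contraction of the cone field forces trajectories starting from nearby initial data in the cone-ordered sense to be squeezed together, so the $\omega$-limit set of a point whose trajectory accumulates at an equilibrium and which is not itself on a lower-dimensional exceptional set must reduce to that single equilibrium; this is where the ``almost every'' enters — the exceptional set is the stable manifold of order-unstable equilibria and of any separatrices, which is closed with empty interior (a meagre/measure-zero set) by the standard monotone-systems argument revisited in \cite{Forni2014a_ver1}.

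Concretely, the steps in order are: (1)~nonempty compact invariant $\omega$-limit sets from compactness of $\calC$; (2)~from Lemma~\ref{thm:dsch}, construct the continuous dominant line field on $\calC$ and show $f$ is aligned with it on limit sets where $f\neq0$; (3)~Poincar\'e--Bendixson alternative on $\Omega(x_0)$: either a fixed point, or a closed conal curve; (4)~eliminate the closed-conal-curve case using the twice-intersection hypothesis; (5)~use cone contraction plus the genericity argument of \cite[Corollary~5]{Forni2014a_ver1} to conclude single-equilibrium convergence off a negligible set. The main obstacle I anticipate is step~(3)--(4): making rigorous the sense in which strict differential positivity forces the asymptotic dynamics on $\Omega(x_0)$ to be one-dimensional and hence subject to a Poincar\'e--Bendixson trichotomy — in particular handling the case where $f$ vanishes somewhere on $\Omega(x_0)$ (so the line field degenerates) and ensuring that the closed conal curve one extracts in the non-fixed-point case is genuinely an integral curve of $\calK$ and genuinely trapped in $\calC$. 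Handling equilibria carefully, and showing the exceptional set is closed with empty interior in the manifold setting, is the delicate remaining bookkeeping.
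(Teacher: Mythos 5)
Your outline circles the right ingredients (the twice-intersection hypothesis to kill recurrent non-equilibrium behaviour, cone contraction for the generic convergence), but the two load-bearing steps are exactly the ones you leave open, and neither survives as stated. First, the Poincar\'e--Bendixson trichotomy you invoke in steps (3)--(4) is not available: nothing in the hypotheses makes the dynamics on $\Omega(x_0)$ one-dimensional, and the continuous ``dominant direction'' field $w(x)$ you posit need not exist on all of $\calC$, nor need $f$ align with it on the limit set --- indeed the hard case is precisely when $f(\psi_t(x))$ \emph{never} enters $-\calK(\psi_t(x))\cup\calK(\psi_t(x))$, so that $f$ is never a cone vector and is never attracted to any dominant ray. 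The paper sidesteps all of this with an elementary dichotomy: since $(\psi(\cdot,x),f(\psi(\cdot,x)))\in\delta\Sigma$, either $f(\psi_t(x))$ eventually lies in $\pm\calK(\psi_t(x))$ --- in which case the trajectory itself traces a conal curve with speed bounded below along a subsequence of times, hence by the twice-intersection hypothesis must exit $\calC$, contradicting forward invariance --- or it never does, and one must argue directly that the basin of $\omega(x)$ is thin.

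Second, your treatment of ``almost every'' is a citation, not an argument. The paper's mechanism in the remaining case is quantitative: projective contraction forces $d_{\calK(\psi_t(x))}(\alpha\partial\psi_t(x)\delta x+f(\psi_t(x)),\,\alpha\partial\psi_t(x)\delta x)\to 0$ while $|f(\psi_{t_k}(x))|\geq c>0$ along a subsequence, which is only possible if $|\partial\psi_t(x)\delta x|_{\psi_t(x)}\to\infty$ for every $\delta x\in\mathrm{int}\,\calK(x)$, i.e.\ \eqref{eq:local_instability}. If the basin of attraction were full-dimensional it would contain a nontrivial conal curve segment, whose transported length $\int_0^1 |\partial\psi_t(\gamma(s))\dot\gamma(s)|_{\psi_t(\gamma(s))}\,ds$ then grows unbounded, forcing the curve out of the compact forward-invariant set $\calC$ --- a contradiction. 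Your appeal to stable manifolds of ``order-unstable equilibria'' being meagre does not supply this expansion estimate and, without it, does not rule out a full-dimensional set of initial conditions accumulating on a non-equilibrium limit set. The proposal therefore has genuine gaps at both of its critical junctures.
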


The next theorem exploits
the combination of differential positivity with 
the existence of an invariant vector field
$v(x)\in \calK_\varepsilon(x)$. It shows
that the trajectories of the system converge
asymptotically to a one dimensional
attractor given by the image of 
an integral curve of $v(x)$.

\begin{theorem}
\label{thm:one-dimensional-attractor}
Under Assumption \ref{assume:completeness},
consider a strictly differentially positive system $\Sigma$ 
with respect to the cone field $\calK(x)$
in a compact forward invariant region $\calC\subseteq\calX$. 
Let $\varepsilon>0$ and suppose that 
there exists a complete vector field $v(x) \in \calK_\varepsilon(x)\setminus \{0\}$
such that 
\begin{equation}
\label{eq:boundedness}
	\limsup_{t\to\infty} |\partial \psi_t(x) v(x)|_{\psi_t(x)} < \infty \ ;
\end{equation}
\begin{equation}
\label{eq:directional_invariance}
	v(\psi_t(x)) = \frac{\partial \psi_t(x) v(x)}{|\partial \psi_t(x) v(x)|_{\psi_t(x)}} 
	\qquad \forall x\in \calC, \forall t\geq 0 \ .
\end{equation}
Then, there exists an integral curve of $v(x)$ whose
image is an attractor for all the trajectories
of $\Sigma$ from $\calC$.
\end{theorem}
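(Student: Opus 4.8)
The plan is to combine the Hilbert-metric contraction of Lemma~\ref{thm:dsch} with the two invariance properties of $v$ to show that the flow asymptotically behaves as a one-dimensional flow directed by $v$, and then to identify the attractor with a single integral curve of $v$.

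First I would extract two consequences of \eqref{eq:directional_invariance}. \emph{(a)~The flow permutes the integral curves of $v$.} If $\gamma$ is any integral curve of $v$, then $s\mapsto\psi_t(\gamma(s))$ has velocity $\partial\psi_t(\gamma(s))\,v(\gamma(s)) = |\partial\psi_t(\gamma(s))\,v(\gamma(s))|_{\psi_t(\gamma(s))}\,v(\psi_t(\gamma(s)))$, i.e.\ a strictly positive multiple of $v$ at every point, so $\psi_t$ maps the image of each integral curve of $v$ onto the image of another one. \emph{(b)~The linearized flow aligns $\calK$ with $v$.} Applying Lemma~\ref{thm:dsch} with $\delta x_2 = v(x)\in\calK_\varepsilon(x)\subseteq\calK(x)$, and using that $d_{\calK(\cdot)}$ is invariant under positive rescaling together with \eqref{eq:directional_invariance}, one gets, for every $x\in\calC$, $\delta x\in\calK(x)\setminus\{0\}$ and $t\ge T$,
\[ d_{\calK(\psi_t(x))}\!\big(\partial\psi_t(x)\delta x,\;v(\psi_t(x))\big)\ \le\ k\,e^{-\lambda(t-T)}\,\Delta, \]
which tends to $0$ as $t\to\infty$. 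Combining this with the order sandwich $m_{\calK(x)}(\delta x,v(x))\,v(x)\le_{\calK(x)}\delta x\le_{\calK(x)}M_{\calK(x)}(\delta x,v(x))\,v(x)$ pushed forward by the order-preserving map $\partial\psi_t(x)$, and with the boundedness \eqref{eq:boundedness} and the compactness of $\calC$, the norms $|\partial\psi_t(x)\delta x|_{\psi_t(x)}$ stay bounded as $t\to\infty$; so along every trajectory the linearized flow neither expands conal vectors nor fails to rotate them towards $\real_{\ge0}\,v$.

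Next I would promote the infinitesimal statement (b) to convergence of trajectories. Fix $x_0\in\calC$ and take a second point $x_1$ joined to $x_0$ by a conal curve $\Gamma$. Transporting $\Gamma$ along the flow gives a conal curve $\psi_t\circ\Gamma$ from $\psi_t(x_0)$ to $\psi_t(x_1)$ whose tangent, by (b), becomes uniformly aligned with $v$ and, by the bound on $|\partial\psi_t(x)\delta x|$, has uniformly bounded length; hence $\psi_t(x_0)$ and $\psi_t(x_1)$ are eventually joined by conal curves that are arbitrarily $C^1$-close to arcs of integral curves of $v$, which forces the two trajectories to accumulate on the same family of $v$-leaves. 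Passing to $\Omega:=\omega(x_0)$ --- nonempty, compact and invariant since $\calC$ is compact and forward invariant --- and using the leaf permutation of (a) together with the completeness of Assumption~\ref{assume:completeness} to make sense of limiting objects, $\Omega$ collapses onto the image of a single integral curve $\gamma^\star$ of $v$ (a closed curve in the periodic case, an arc otherwise). Since the contraction of Lemma~\ref{thm:dsch} is uniform over $\calC$, repeating the argument with $x_0$ arbitrary shows that $\gamma^\star$ does not depend on $x_0$ and that $\mathrm{dist}(\psi_t(x),\gamma^\star(\real))\to0$ for all $x\in\calC$; that is, $\gamma^\star(\real)$ is the claimed attractor.

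The step I expect to be the main obstacle is the passage from (b) to the convergence of trajectories: turning ``$\partial\psi_t(x)$ aligns $\calK(x)$ with $v$'' into ``the trajectories converge to one integral curve of $v$.'' The delicate points are (i)~transporting a conal curve connecting two initial conditions under the flow while keeping its length under control --- this is exactly where the boundedness \eqref{eq:boundedness} is indispensable, since otherwise the transported curve could stretch --- and (ii)~ensuring the pairs of points under consideration are actually joined by conal curves in a region where Lemma~\ref{thm:dsch} applies pointwise along them, and then (iii)~invoking Assumption~\ref{assume:completeness} to guarantee that the limiting direction field integrates to a genuine curve and that this curve is unique. The remaining parts are a routine combination of Lemma~\ref{thm:dsch}, the elementary calculus of $m_{\calK}$ and $M_{\calK}$, and standard $\omega$-limit-set arguments on the compact set $\calC$.
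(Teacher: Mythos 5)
Your step (b) and the leaf-permutation observation (a) match the paper's strategy, and your boundedness argument for conal directions (order sandwich plus \eqref{eq:boundedness}) is an acceptable variant of the paper's contradiction argument. However, there is a genuine gap in the passage to convergence of trajectories: you connect $x_0$ and $x_1$ by a \emph{conal} curve $\Gamma$ and then track its transport. Since $\calK(x)$ is pointed, conal curves are highly constrained, and two arbitrary points of $\calC$ need not be joined by one (indeed, Theorem \ref{thm:bistability} exploits precisely the situation where conal curves leave $\calC$ almost immediately, so conal connectivity inside $\calC$ can be essentially empty). You flag this yourself as delicate point (ii), but it is not a technicality to be patched: it is the place where a new idea is required, and your proposal does not supply it.

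The missing ingredient is the behavior of tangent directions \emph{outside} $-\calK(x)\cup\calK(x)$. The paper takes an \emph{arbitrary} curve $\gamma_0$ joining the two points and analyzes each tangent vector $\delta x = \dot\gamma_0(s)$ separately: if $\partial\psi_t(x)\delta x$ eventually enters $\pm\calK$, your alignment argument applies; if it never does, one writes $\delta x = (\delta x + \alpha v(x)) - \alpha v(x)$ with both terms in $\calK(x)$ for $\alpha$ large, uses Lemma \ref{thm:dsch} to make $\partial\psi_t(x)[\delta x + \alpha v(x)]$ and $\alpha\,\partial\psi_t(x)v(x)$ asymptotically parallel, and then uses \eqref{eq:boundedness} to conclude that $|\partial\psi_t(x)\delta x|_{\psi_t(x)}\to 0$. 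This decay of non-conal directions is what lets every connecting curve (not just conal ones) collapse, in the limit, onto the image of an integral curve of $v$, and it is also what makes the uniqueness step work for attractors that are not conally connected to each other. Without it, your argument establishes convergence only within conal-connectivity classes of $\calC$ and cannot conclude that there is a single attracting integral curve for all of $\calC$.
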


Finding the vector field that satisfies 
\eqref{eq:boundedness} and \eqref{eq:directional_invariance} 
can be difficult in general. 
However, the presence of \emph{symmetries} in the system
makes \eqref{eq:boundedness} and \eqref{eq:directional_invariance} 
tractable conditions.
An example is given by  consensus dynamics 
\cite{Olfati-Saber2007,Sepulchre2010a}
where $n$ agents communicate with their neighborhoods 
to achieve consensus, typically given by 
the manifold $x_1=\dots=x_n$. 
The invariance of the consensus manifold 
dictates the symmetry $\partial f(x) \mathbf{1} = 0$,
$\mathbf{1} := \smallmat{1 & \dots & 1}^T \in T_x\calX $,
which makes 
\eqref{eq:boundedness} and \eqref{eq:directional_invariance} 
trivially verified by $v(x) := \frac{\mathbf{1}}{\sqrt{n}}$.
This observation is used in Section \ref{sec:Kuramoto}
for the analysis of the Kuramoto model.

Replacing $v(x)$ in Theorem \ref{thm:one-dimensional-attractor}
with the system vector field $f(x)$, the next theorem gives 
conditions for the 
existence of attractive limit cycles.
This result is compatible
with Theorem \ref{thm:one-dimensional-attractor}, since
conal curves can be closed curves.

\begin{theorem}
\label{thm:PB}
Under Assumption \ref{assume:completeness},
consider a strictly differentially positive system $\Sigma$ 
with respect to the cone field $\calK(x)$
in a compact forward invariant region $\calC\subseteq\calX$.
Suppose that $\calC$ does not contain any fixed point. 
If $f(x) \in \calK_\varepsilon(x)\setminus\{0\}$ 
for any $x\in \calC$, 
for some $\varepsilon>0$, then
there exists a unique attractive periodic orbit contained in $\calC$.
\end{theorem}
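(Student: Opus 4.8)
\noindent\emph{Proof plan.} The plan is to obtain the attractor from Theorem~\ref{thm:one-dimensional-attractor} applied to the normalization of $f$, and then to identify that attractor with a periodic orbit through a Poincar\'e--Bendixson type argument built on the conal order and on the Hilbert metric contraction of Lemma~\ref{thm:dsch}.

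First I would set $v(x) := f(x)/|f(x)|_x$. Since $\calC$ contains no fixed point, $v$ is well defined, nonvanishing and unit on $\calC$, and $v(x)\in\calK_\varepsilon(x)\setminus\{0\}$ because $\calK_\varepsilon(x)$ is a cone and $f(x)\in\calK_\varepsilon(x)\setminus\{0\}$; on the compact forward invariant $\calC$ the flow of $v$ generates the same oriented trajectories as $\Sigma$, hence is complete and $\calC$-valued. Using the standard identity $\partial\psi_t(x)f(x) = f(\psi_t(x))$ (both sides solve the same linearized equation from the same initial condition $f(x)$), one verifies \eqref{eq:directional_invariance} for $v$ directly, while \eqref{eq:boundedness} follows from $|\partial\psi_t(x)v(x)|_{\psi_t(x)} = |f(\psi_t(x))|_{\psi_t(x)}/|f(x)|_x \le (\max_{y\in\calC}|f(y)|_y)/(\min_{y\in\calC}|f(y)|_y) < \infty$, again by compactness of $\calC$ and absence of fixed points. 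Theorem~\ref{thm:one-dimensional-attractor} then supplies an integral curve $\gamma$ of $v$ --- equivalently a trajectory of $\Sigma$ up to reparametrization --- whose image $\Gamma$ is an attractor for every trajectory of $\Sigma$ from $\calC$. As $\calC$ is closed and forward invariant, $\Gamma\subseteq\calC$, so $\Gamma$ is compact, flow-invariant, one-dimensional and fixed-point free.

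The decisive step, and the one I expect to be the main obstacle, is to show that the $\omega$-limit set $\calO := \omega(\gamma)\subseteq\Gamma$ is a periodic orbit. I would fix $p\in\calO$ and a local section $\calT$ transverse to $f$ at $p$. The cone field $\calK$ induces a conal partial order on a neighborhood of $p$ which restricts to a one-dimensional order on $\calT$; the Poincar\'e return map $P:\calT\to\calT$ is well defined near $p$ because $\Gamma$, hence $\calO$, is attracting, so trajectories do return to $\calT$. Forward invariance of $\calK$ along the flow makes $P$ monotone for the order on $\calT$ --- two ordered points cannot swap order, since that would force a conal curve to cross the flow transversally in a way excluded by differential positivity --- and strictness, through the exponential contraction of the Hilbert metric in Lemma~\ref{thm:dsch}, collapses nearby rays of $\calK$ and rules out recurrent behavior on $\calT$ other than a fixed point of $P$. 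A bounded monotone scalar map has monotone orbits converging to a fixed point, so $P$ has a fixed point on $\calT$; its flow orbit is a periodic orbit inside $\calO$, and the usual transversal/connectedness argument (exactly as in the planar Poincar\'e--Bendixson theorem, with Lemma~\ref{thm:dsch} replacing planarity; cf.\ \cite[Section VIII]{Forni2014a_ver1}) forces $\calO$ to coincide with this periodic orbit.

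Finally I would wrap up attractivity and uniqueness. Since $\Gamma$ attracts every trajectory from $\calC$ and the forward flow on the one-dimensional $\Gamma$ converges to $\calO$, every trajectory from $\calC$ converges to $\calO$; thus $\calO$ is an attractive periodic orbit contained in $\calC$. For uniqueness, let $\calO'\subseteq\calC$ be any periodic orbit: a trajectory on $\calO'$ is bounded, hence converges to the attractor, so its $\omega$-limit set, which equals $\calO'$, must coincide with $\calO$. Therefore $\calO'=\calO$, and the attractive periodic orbit in $\calC$ is unique.
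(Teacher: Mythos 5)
Your first half is sound and coincides with the paper's own route: the paper likewise notes that $0<c_1\le|f(x)|\le c_2$ on the compact, fixed-point-free $\calC$ and reruns the proof of Theorem~\ref{thm:one-dimensional-attractor} with $v$ replaced by (the normalization of) $f$, obtaining a single trajectory of $\Sigma$ whose image attracts every trajectory from $\calC$; your verification of \eqref{eq:boundedness} and \eqref{eq:directional_invariance} via $\partial\psi_t(x)f(x)=f(\psi_t(x))$ and the bound $c_2/c_1$ is correct.

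The gap is in the step you yourself flag as decisive. First, the conal order induced by $\calK$ does \emph{not} restrict to a one-dimensional order on a transversal section $\calT$ when $\dim\calX\ge 3$: $\calT$ is an $(n-1)$-dimensional disc and $\calK(p)\cap T_p\calT$ is in general either trivial or a solid $(n-1)$-dimensional cone, so the induced relation is only a partial order in which typical pairs of points on $\calT$ are incomparable, and the sentence ``a bounded monotone scalar map has monotone orbits converging to a fixed point'' has nothing to act on. Second, even granting a usable order, monotonicity of the return map $P(x)=\psi_{\tau(x)}(x)$ does not follow from monotonicity of the time-$t$ maps, because the return time $\tau(\cdot)$ varies with the point; this is the classical obstruction in Poincar\'e--Bendixson-type arguments for monotone flows and is not addressed. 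The paper sidesteps both issues by making the return map a \emph{metric} contraction rather than an order-monotone map: from the boundedness \eqref{eq:local_stability2} one gets $\limsup_{t\to\infty}\ell(\psi_t(\gamma(\cdot)))\le c_3\rho$ for any curve $\gamma$ of length at most $\rho$ in the section through a recurrent point $x^*$, while Lemma~\ref{thm:dsch} forces the transported tangent vectors $\frac{d}{ds}\psi_t(\gamma(s))$ to align asymptotically with $f$; transversality then sends $\calS\cap\calB_{\rho}(x^*)$ back into $\calS\cap\calB_{2\rho/3}(x^*)$ along the returns, and the resulting contraction has a fixed point whose orbit is the closed attractor. Your uniqueness and attractivity wrap-up is fine, but the periodicity step needs to be replaced by (or rebuilt on) this length-plus-alignment estimate.
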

\begin{remark}
Converse results for hyperbolic limit cycles can be found in \cite{Mauroy2015}.
Theorem \ref{thm:PB} revisits \cite[Corollary 2]{Forni2014a_ver1},
which requires differential positivity in the whole manifold $\calX$,
a condition weakened by Theorem \ref{thm:PB}.
The key step for this result is a new proof that
does not use the so-called
Perron-Frobenius vector field of \cite[Section VI]{Forni2014a_ver1}.
\end{remark}

\subsection{Proofs}

\begin{proofof}\emph{Theorem \ref{thm:bistability}.}
Suppose that for some $x\in\calX$, the trajectory $\psi(\cdot,x)$
does not converge to a fixed point and denote by $\omega(x)$
the $\omega$-limit set of $x$.
Then, there exists a sequence of time instant $t_k \to \infty$
as $k\to \infty$ and $c>0$
such that $|f(\psi(t_k,x))| > c$, for all $k\in \mathbb{N}$.
By continuity, since $\psi(t,x) \in \calC$ for all $t\geq 0$, 
there exists a small constant $\rho >0$ such that 
$|f(\psi(t,x))| > c$ for all $t\in[t_k-\rho,t_k+\rho]$.

Recall also that 
$(\psi(\cdot,x),f(\psi(\cdot,x))\in\delta\Sigma$ since
$\frac{d}{dt} f(\psi(t,x)) = \partial f(\psi(t,x)) f(\psi(t,x))$. 
Thus, either \textbf{(i)} there exists $\tau\geq 0$ such that 
$f(\psi(t,x)) \in -\calK(\psi(t,x))\cup \calK(\psi(t,x))$ 
for all $t\geq \tau$, or \textbf{(ii)}
$f(\psi(t,x))\notin -\calK(\psi(t,x))\cup \calK(\psi(t,x))$ for all $t\geq 0$.

For \textbf{(i)}, note that the points $\psi_{t+\tau}(x)$ for $t\geq 0$ belong to 
the image of a conal curve, 
since $\frac{d}{dt}\psi_{t+\tau}(x) = f(\psi_{t+\tau}(x)) \in \calK(\psi_{t+\tau}(x))$ 
(or $-f(\psi_{t+\tau}(x)) \in \calK(\psi_{t+\tau}(x))$) for any $t\geq 0$.
Then, since $|f(\psi_{t+\tau}(x))| \geq c$ for all $t+\tau\in[t_k-\rho,t_k+\rho]$,
exploiting the fact that every conal curve intersects the boundary
of $\calC$ twice, there exists a time $T> 0$ such that $\psi_{T+\tau}(x)\notin \calC$.
This contradicts the forward invariance of $\calC$.

For \textbf{(ii)}, we show that the basin of attraction of $\calA:=\omega(x)$ 
has dimension $n-1$ at most. 
We need a preliminary result. Consider any $\delta x \in \calK_\varepsilon(x)$
for some $\varepsilon > 0$.
Then, for $\alpha>0$ sufficiently large
$f(x) + \alpha \delta x \in \calK(x)$.
Thus, by projective contraction,
$
\lim\nolimits\limits_{t\to\infty} d_{\calK(\psi_t(x))}(\partial \psi_t(x)[f(x) + \alpha \delta x], \partial \psi_t(x) \alpha \delta x)  
=
\lim\nolimits\limits_{t\to\infty} 
d_{\calK(\psi_t(x))}(\alpha \partial \psi_t(x)\delta x + f(\psi_t(x)),\alpha \partial \psi_t(x)  \delta x) 
= 0
$.
Since $ |f(\psi_{t_k}(x))| \geq c$, it follows that 
\begin{equation}
\label{eq:local_instability}
\lim_{t\to\infty} |\partial \psi_t(x)\delta x|_{\psi_t(x)} = \infty \qquad \forall \delta x \in \mbox{int} \calK(x) \ .
\end{equation}

By contradiction, suppose now that the basin of attraction
$\calB_\calA$ has dimension $n$. 
Consider a conal curve $\gamma$ such that $\gamma(0) \in \calA$ and 
$\gamma(s) \in \calB_\calA\setminus\{A\}$ for all $s\in [0,\overline{s}]\subseteq\real$.
By assumption,
for all $s\in[0,\overline{s}]$, $ \psi_t (\gamma(s))$ converges asymptotically to $\calA$ 
as $t\to \infty$. Moreover, $f(\psi_t (\gamma(s))) \notin \calK(\psi_t (\gamma(s)))$
for all $t\geq 0$,  
and $ |f(\psi_{t_k}(\gamma(s)))| \geq c$ for $k$ sufficiently large.
From \eqref{eq:local_instability}, it follows that
\begin{equation}
\label{eq:local_instability2}
\lim_{t\to\infty} |\partial \psi_t(x) \dot{\gamma}(s)|_{\psi_t(\gamma(s))} = \infty.
\end{equation}
From \eqref{eq:local_instability2},
the length of the curve $\psi_t (\gamma(\cdot))$ grows unbounded as $t \to \infty$,
since $\ell(\psi_t(\gamma(\cdot))) = \int_0^1 |\frac{d}{ds} \psi_t (\gamma(s)) ds|_{\psi_t(\gamma(s))} 
= \int_0^1 |\partial \psi_t (\gamma(s)) \dot{\gamma}(s) |_{\psi_t(\gamma(s))}  ds$.
It follows that there exists $T > 0$ such that 
$\psi_T(\gamma(\bar{s})) \notin \calC$, contradicting the forward invariance of the set.
\end{proofof}

\begin{proofof}\emph{Theorem \ref{thm:one-dimensional-attractor}.}
[\textbf{(i)} boundedness]
Under the assumptions of the theorem, we prove that
for any given $\delta x \in T_x\calX$,
\begin{equation}
\label{eq:local_stability}
\lim_{t\to\infty} |\partial \psi_t(x)\delta x|_{\psi_t(x)} < \infty.
\end{equation}
Consider first the case $\partial \psi_t(x) \delta x \notin -\calK(\psi_t(x))\cup \calK(\psi_t(x))$ for all $t \geq 0$.
Then for $\alpha >0 $ sufficiently large
$ \delta x + \alpha v(x) \in \calK(x)$, thus Lemma \ref{thm:dsch} 
gives
$
\lim\nolimits\limits_{t\to\infty}
d_{\calK(\psi_t(x))}(\partial \psi_t(x)[\alpha  v(x) +  \delta x], \partial \psi_t(x) v(x))  
 = 
\lim\nolimits\limits_{t\to\infty} 
d_{\calK(\psi_t(x))}(\alpha\partial \psi_t(x)  v(x)  \!+\! \partial \psi_t(x)\delta x ,\alpha\partial \psi_t(x)  v(x)) 
 = 0 
 $.
Since $|\partial \psi_t(x)  v(x)|_{\psi_t(x)}$ is bounded,
it follows that 
\begin{equation}
\label{eq:local_convergence}
\lim_{t\to\infty} |\partial \psi_t(x)\delta x|_{\psi_t(x)} = 0 \ .
\end{equation}

Consider now the case $\partial \psi_T(x) \delta x \in \calK(\psi_T(x))$ for some $T\geq 0$.
If $\partial \psi_T(x) \delta x \in -\calK(\psi_T(x))$, consider the symmetric trajectory
$\partial \psi_T(x) [-\delta x] \in \calK(\psi_T(x))$.
Suppose that $\lim\nolimits\limits_{t\to\infty} |\partial \psi_t(x)\delta x|_{\psi_t(x)} = \infty$.
Then, by projective contraction
$
d_{\calK(\psi_{t+T}(x))}( 
\partial \psi_{t+T}(x)\delta x , 
\partial \psi_{t+T}(x)  v(x) ) = 0 
$.
Therefore, for $t$ sufficiently large, 
$\partial \psi_{t+T}(x)v(x)$ is almost parallel to $\partial \psi_{t+T}(x)\delta x$. 
By linearity of $\partial \psi_{t+T}(x)$, it follows that 
$\partial \psi_{t+T}(x) v(x)$
grows unbounded, contradicting \eqref{eq:boundedness}.

[\textbf{(ii)} horizontal contraction]
We show that any pair of points in $\calC$ 
converge to the image of an integral curve of $v(x)$.
Consider any curve $\gamma_0(\cdot):[0,1] \to \calC$ connecting
two different points $\gamma_0(0), \gamma_0(1) \in \calC$.
The evolution of the curve along the flow 
is given by $\gamma_t(s) := \psi_t(\gamma(s))$.
Note that $\frac{d}{ds} \gamma_t(s) = \partial\psi_t(\gamma(s)) \dot{\gamma}(s)$.
From \textbf{(i)}, either 
$\lim\nolimits\limits_{t \to \infty} 
\frac{d}{ds} \gamma_t(s) = 0$ by \eqref{eq:local_convergence},
or there exists $T$ such that 
$
\frac{d}{ds} \gamma_t(s) \in -\calK(\gamma_t(s)) \cup\calK(\gamma_t(s))
$
for all $t\geq T$.
In this last case, using \eqref{eq:local_stability},
$\frac{d}{ds} \gamma_t(s)$ is bounded,
and 
$d_{\calK(\gamma_t(s))}\left( 
\frac{d}{ds} \gamma_t(s), 
\partial \psi_t(\gamma(s))  v(\gamma(s)) \right) = 0$.
Thus, in the limit of $t\to \infty$, 
$\frac{d}{ds} \gamma_t(s)$ becomes parallel to $v(\gamma_t(s))$.
It follows that 
$\psi_t(\gamma(s))$ belongs to the image of an integral curve of 
$v(x)$.

[\textbf{(iii)} uniqueness]
By contradiction,
let $\calA_1$ and $\calA_2$ be images of two 
distinct attractive integral curves of $v(x)$.
Take $x_1 \in \calA_1$, $x_2 \in \calA_2$,
and consider a new curve $\gamma$ connecting them.
Along the flow, $\psi_t(\gamma(s))$ 
converges to an integral curve of $v(x)$. 
Thus, by completeness, $\calA_1$ and $\calA_2$
are subsets of the image of the same 
integral curve of $v(x)$, contradicting the 
initial hypotesis.
\end{proofof}

\begin{proofof}\emph{Theorem \ref{thm:PB}.}
From the conditions of the theorem, there
exists $0 < c_1 < c_2$ such that 
$c_1 \leq |f(x)| \leq c_2 $ for all $x\in \calC$. 
Reasoning like in the proof of 
Theorem \ref{thm:one-dimensional-attractor}
by replacing $v(x)$ by $f(x)$,
we get
\begin{equation}
\label{eq:local_stability2}
\lim_{t\to\infty} |\partial \psi_t(x)\delta x|_{\psi_t(x)} < \infty
\end{equation}
for any $x\in \calC$ and $\delta x\in T_x\calX$.
Furthermore, we can conclude that there exists
a unique integral curve of the vector field $f(x)$
- a trajectory - whose image is an attractor
for every trajectory of $\Sigma$ from $\calC$. 
In what follows we show that this curve is closed,
by following the proof of \cite[Corollary 2]{Forni2014a_ver1}.

Let $\psi(\cdot,x)$ be such a curve, for some $x\in \calC$. 
$\psi(\cdot,x)$ does not converge to a fixed point
and belongs to a compact set, therefore there
is a point $x^* = \psi(t^*,x)$ whose neighborhood
$\calB_\rho(x^*)$ is visited by the trajectory
infinitely many times, for any $\rho >0$. 
Take $\rho>0$ sufficiently small and 
let $\calS\subseteq\calC$ be
a transversal manifold to the trajectories of $\Sigma$
in $\calB_\rho(x^*)$, namely
$f(x) \notin T_x \calS$ for all $x\in \calS\cap\calB_\rho(x^*)$
and $x^* \in \calS$.
Consider a sequence of time instants $t_k \to \infty$ for $k\to\infty$
such that $\psi(t_k,x^*) \in \calS\cap\calB_\rho(x^*)$,
and the subsequence of time instants 
$t_{k_j} \to \infty$ for $j\to\infty$
such that $\psi(t_{k_j},x^*) \in \calS\cap\calB_{\frac{\rho}{3}}(x^*)$,

Consider now any curve $\gamma(\cdot):[0,1]\to\calS\cap\calB_\rho(x^*)$
such that the length of 
$\ell(\gamma(\cdot)) = \int_0^1 |\dot{\gamma}(s)|_{\gamma(s)} ds$
is less than or equal to $\rho \in \real$ and
$\gamma(0) = x^*$.
By \eqref{eq:local_stability2},
there exists  $c_3\geq 1$ such that
$\limsup_{t\to\infty} \ell(\psi_t(\gamma(\cdot))) \leq c_3\rho$.
Furthermore, 
$
\lim_{t\to\infty}
d_{\calK(\psi_t(x))}
(\frac{d}{ds}\psi_t(\gamma(s)), f(\psi_t(\gamma(s))) = 0 
$
for all $s \in [0,1]$, that is, 
in the limit of $t\to \infty$, 
the image of $\psi_t(\gamma(\cdot))$ converges asymptotically to
the image of the attractive integral curve of the vector field $f(x)$.
Thus, because of the transversality of $\calS$,
the combination of the bound 
$\limsup_{t\to\infty} \ell(\psi_t(\gamma(\cdot))) \leq c_3\rho$
and of the convergence of the Hilbert metric 
guarantees  
that there exist
$k$ sufficiently large and $\tau_k(s)$ typically
small, such that
$\psi(t_k+ \tau_k(s),\gamma(s)) \in \calS\cap\calB_{\frac{\rho}{3}}(\psi(t_k,x^*))$,
$\forall s\in [0,1]$.
Thus, there exists a $j$ sufficiently large such that
$\psi(t_{k_j}+ \tau_{k_j}(s),\gamma(s)) \in 
\calS\cap\calB_{\frac{\rho}{3}}(\psi(t_{k_j},x^*))
\subset \calS\cap\calB_{\frac{2\rho}{3}}(x^*)$.
It follows that the return map on $\calS$
is necessarily a contraction, which implies
that the attractor is a closed curve.
\end{proofof}

\section{Extended example: all-to-all Kuramoto}
\label{sec:Kuramoto}

We study the 
synchronization of Kuramoto dynamics for the case of 
all-to-all coupling with homogeneous velocities at zero
using the results of Section \ref{sec:asymptotic_behavior}.
Differential positivity is verified by using
the geometric conditions of Section \ref{sec:geometric_conditions}.

Consider the interconnection of $n$ agents (phase variables)
given by 
$\dot{\vartheta}_k = 
 \frac{1}{n}\sum_{i=1}^n  \sin(\vartheta_i - \vartheta_k)
$ where $\vartheta_k \in \mathbb{S}$.
The synchronization manifold is given by 
$ \calA := \{\vartheta \in \mathbb{S}^n\,|\,
\vartheta_1 = \dots = \vartheta_n \} $,
where $\vartheta := \smallmat{\vartheta_1 & \dots & \vartheta_n}^T$,
$k \in \calI :=  \{1,\dots, n\}$.
Using $\mathbf{1} := \smallmat{1 & \dots & 1}^T$,
the prolonged dynamics reads
\begin{equation}
\left\{
\begin{array}{rcl}
 \dot{\vartheta} &=& 
 \frac{1}{n}\mathbf{S}(\vartheta)\mathbf{1} \\
 \dot{\delta \vartheta} &=& \frac{1}{n}\mathbf{C}(\vartheta) \delta \vartheta
\end{array}
\right.
\qquad (\vartheta, \delta \vartheta) \in T\mathbb{S}^n
\end{equation}
where for $k,i\in \calI $,
$\mathbf{S}_{ki}(\vartheta) := \sin(\vartheta_i-\vartheta_k)$,
$\mathbf{C}_{kk}(\vartheta) := -\sum_{i\neq k}\cos(\vartheta_i-\vartheta_k)$,
and  $\mathbf{C}_{ki}(\vartheta) := \cos(\vartheta_i-\vartheta_k)$
for $ k \neq j$.

The invariance of the synchronization manifold 
is captured by the identity $\mathbf{C}(\vartheta) \mathbf{1} = 0$.
The constant vector field $v(\vartheta) = \frac{\mathbf{1}}{\sqrt{n}}$ 
satisfies the conditions of 
Theorem \ref{thm:one-dimensional-attractor}. 
Thus, exploiting the theorem, 
given a compact and forward invariant 
region $\calC$,
all trajectories from $\calC$ converge 
asymptotically  to the synchronization manifold $\calA\subseteq \calC$,
provided that the system is strictly differentially positive
with respect to some cone field $\calK(\vartheta)$ such that
$\mathbf{1} \in \mathrm{int}\calK (\vartheta)$
(interior of $\calK(\vartheta)$).

Define $\Pi:= \left[I - \frac{\mathbf{1}\mathbf{1}^T}{n}\right]$
and consider the
centroid $\rho e^{i\phi} := \frac{1}{n}\sum_{k\in \cal I}e^{i\vartheta_k}$.
$\rho\in[0,1]$ is a measure of the synchrony of the phase variables
\cite[Section III]{Sepulchre07kuramoto}.
We show that the system is strictly differentially positive
in any compact and forward invariant region $\calC\subset \mathbb{S}^n$
with respect to the quadratic cone field 
\eqref{eq:quadratic_constraints_vectors} given by
\begin{equation}
\label{eq:quadratic_cone_kuramoto2}
\begin{array}{rcl}
K_1(\vartheta,\delta \vartheta) \!\!&\!\!:=\!\!&\!\!  \mathbf{1}^T \delta \vartheta \vspace{2mm} \\
K_2(\vartheta,\delta \vartheta) \!\!&\!\!:=\!\!&\!\!  
e^{2\lambda \rho}\delta \vartheta^T \mathbf{1} \mathbf{1}^T \delta \vartheta -
\delta \vartheta^T \Pi \delta \vartheta 
\end{array}
\end{equation}
where $\lambda > 0$ is a parameter defined below.
An illustration is provided in 
Figure \ref{fig:quadratic_kuramoto}.
\begin{figure}[htbp]
\centering
\includegraphics[width=0.25\columnwidth]{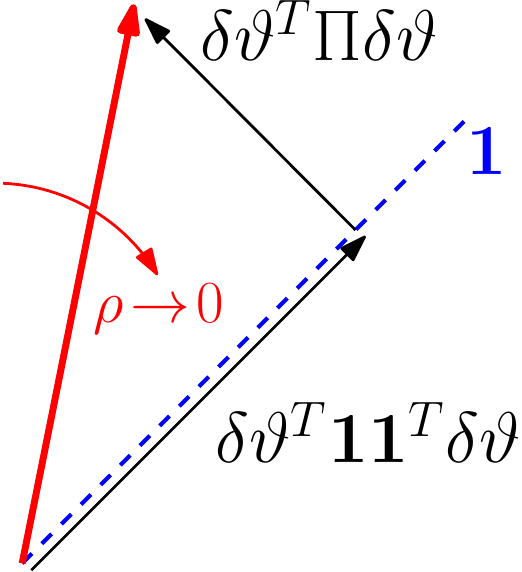}  
\caption{Increasing $\rho$ widens the cone. The cone is
widest at $\rho = 1$.}
\label{fig:quadratic_kuramoto}
\end{figure}

For $K_2(\vartheta,\delta \vartheta) = 0$, using
the identities $\mathbf{1}^T \mathbf{C}(\vartheta) = 0$
and $\Pi \mathbf{C}(\vartheta) = \mathbf{C}(\vartheta)$,
$\frac{d}{dt}K_2(\vartheta,\delta \vartheta)$ along the flow
of the prolonged dynamics reads
\begin{equation}
\label{eq:derivative_kuramoto}
2\lambda \dot{\rho} e^{2\lambda \rho}\delta \vartheta^T \mathbf{1} \mathbf{1}^T \delta \vartheta
-\delta \vartheta^T (\mathbf{C}(\vartheta)^T
+\mathbf{C}(\vartheta)) \delta \vartheta 
\end{equation}
where, following \cite[Section III]{Sepulchre07kuramoto} and \cite[Section VII]{Forni2014},
$
\dot{\rho} = \frac{\rho}{n} \sum_{k\in\calI} \sin(\vartheta_k-\phi)^2 
$.
In particular, $\dot{\rho}=0$ 
for $\rho =0$ (balanced phases, max spread on the circle) or for
$\sum_{k\in\calI} \sin(\vartheta_k-\phi)^2=0$, which occurs on isolated 
critical points given by $n-m$ phases synchronized at $\phi + 2j\pi$ and $m$ 
phases synchronized at $\phi+\pi+2j\pi$, for $j\in \mathbb{N}$ and $0\leq m\leq \frac{n}{2}$
(for an extended analysis see \cite[Section III]{Sepulchre07kuramoto}). 

For 
$\vartheta \in \mathbb{S}^n_{\pi/2} := 
 \left\{\vartheta \in \mathbb{S}^n\,|\, 
|\vartheta_k-\vartheta_j| < \frac{\pi}{2} \, , \ \forall k,j  \in \calI  \right\}$,
the quantity 
$-\delta \vartheta^T (\mathbf{C}(\vartheta)^T
+\mathbf{C}(\vartheta)) \delta \vartheta >0$. 
For $\vartheta \notin \mathbb{S}^n_{\pi/2}$,
we can design $\lambda$ so that
the first term 
$2\lambda \dot{\rho} e^{2\lambda \rho}\delta \vartheta^T \mathbf{1} \mathbf{1}^T \delta \vartheta$ in \eqref{eq:derivative_kuramoto}
dominates $-\delta \vartheta^T (\mathbf{C}(\vartheta)^T
+\mathbf{C}(\vartheta)) \delta \vartheta $.
Given any compact and forward invariant set $\mathcal{C}\subset \mathbb{S}^n$
that does not contain any balanced phase ($\rho = 0$) or saddle point
($\sum_{k\in\calI} \sin(\vartheta_k-\phi)^2=0$), 
there exists a sufficiently small
$\varepsilon >0$ such that 
$\sum_{k\in\calI} \sin(\vartheta_k-\phi)^2>\varepsilon$ and $\rho >0$ for every $\vartheta\in\mathcal{C}$. 
Thus, there exists a sufficiently large
$\lambda$ such that 
$K_2(\vartheta,\delta \vartheta) = 0 \Rightarrow \frac{d}{dt}K_2(\vartheta,\delta \vartheta) > 0 $. 

\section{Conclusion}
Differential positivity provides
a number of methods for
the analysis of simple attractors 
of nonlinear (closed) systems,
capturing bistable and periodic behaviors.
The introduction of compact sets 
simplifies the use of differential positivity in applications.
The theory have been illustrated on several examples 
based on polyhedral and quadratic cone fields.
Future research directions will 
study differential positivity on open systems. 

\bibliographystyle{plain}

\begin{thebibliography}{10}

\bibitem{Angeli2003}
D.~Angeli and E.D. Sontag.
\newblock Monotone control systems.
\newblock {\em IEEE Transactions on Automatic Control}, 48(10):1684 -- 1698,
  2003.

\bibitem{Birkhoff1957}
G.~Birkhoff.
\newblock Extensions of {J}entzsch's theorem.
\newblock {\em Transactions of the American Mathematical Society}, 85(1):pp.
  219--227, 1957.

\bibitem{Bushell1973}
P.J. Bushell.
\newblock Hilbert's metric and positive contraction mappings in a {B}anach
  space.
\newblock {\em Archive for Rational Mechanics and Analysis}, 52(4):330--338,
  1973.

\bibitem{Crouch1987}
P.E. Crouch and A.J. van~der Schaft.
\newblock {\em Variational and Hamiltonian control systems}.
\newblock Lecture notes in control and information sciences. Springer, 1987.

\bibitem{Farina2000}
L.~Farina and S.~Rinaldi.
\newblock {\em Positive linear systems: theory and applications}.
\newblock Pure and applied mathematics (John Wiley \& Sons). Wiley, 2000.

\bibitem{Forni2014}
F.~Forni and R.~Sepulchre.
\newblock A differential {L}yapunov framework for contraction analysis.
\newblock {\em IEEE Transactions on Automatic Control}, 59(3):614--628, 2014.

\bibitem{Forni2014a_ver1}
F.~Forni and R.~Sepulchre.
\newblock Differentially positive systems.
\newblock {\em Accepted in IEEE Transactions on Automatic Control (Available at
  http://arxiv.org/abs/1405.6298)}, 2014.

\bibitem{Hirsch1988}
M.W. Hirsch.
\newblock Stability and convergence in strongly monotone dynamical systems.
\newblock {\em Journal f{\"u}r die reine und angewandte Mathematik}, 383:1--53,
  1988.

\bibitem{Hirsch1995}
M.W. Hirsch.
\newblock Fixed points of monotone maps.
\newblock {\em Journal of Differential Equations}, 123(1):171 -- 179, 1995.

\bibitem{Lemmens2012}
B.~Lemmens and R.D. Nussbaum.
\newblock {\em Nonlinear Perron-Frobenius theory}.
\newblock Cambridge tracts in mathematics, 189. Cambridge University Press,
  2012.

\bibitem{Mauroy2015}
A.~Mauroy, F.~Forni, and R.~Sepulchre.
\newblock An operator-theoretic approach to differential positivity.
\newblock In {\em Submitted to the 54th IEEE Conference on Decision and
  Control}, 2015.

\bibitem{Moreau2004}
L.~Moreau.
\newblock Stability of continuous-time distributed consensus algorithms.
\newblock In {\em 43rd IEEE Conference on Decision and Control}, volume~4,
  pages 3998 -- 4003, 2004.

\bibitem{Olfati-Saber2007}
R.~Olfati-Saber, J.A. Fax, and R.M. Murray.
\newblock Consensus and cooperation in networked multi-agent systems.
\newblock {\em Proceedings of the IEEE}, 95(1):215--233, Jan 2007.

\bibitem{Rantzer2012}
A.~Rantzer.
\newblock Distributed control of positive systems.
\newblock {\em ArXiv e-prints}, 2012.

\bibitem{Roszak2009}
B.~Roszak and E.J. Davison.
\newblock Necessary and sufficient conditions for stabilizability of positive
  {LTI} systems.
\newblock {\em Systems \& Control Letters}, 58(7):474 -- 481, 2009.

\bibitem{Sepulchre2010a}
R.~Sepulchre.
\newblock Consensus on nonlinear spaces.
\newblock In {\em 8th IFAC Symposium on Nonlinear Control Systems}, 2010.

\bibitem{Sepulchre07kuramoto}
R.~Sepulchre, D.A. Paley, and N.E. Leonard.
\newblock Stabilization of planar collective motion: {A}ll-to-all
  communication.
\newblock {\em IEEE Transactions on Automatic Control}, 52(5):811--824, 2007.

\bibitem{Smith1995}
H.L. Smith.
\newblock {\em Monotone Dynamical Systems: An Introduction to the Theory of
  Competitive and Cooperative Systems}, volume~41 of {\em Mathematical Surveys
  and Monographs}.
\newblock American Mathematical Society, 1995.

\bibitem{Zhai2011}
C.B. Zhai and Z.D. Liang.
\newblock Hilbert's projective metric and the norm on a {B}anach space.
\newblock {\em Journal of Mathematical Research with Applications},
  31(1):91--99, 2011.

\end{thebibliography}

\end{document}